\newcommand{\REV}[1]{\ensuremath{\overline{#1}}}
\newcommand{\RLBWT}{\ensuremath{\mathsf{RLBWT}}}
\newcommand{\ST}{\ensuremath{\mathsf{ST}}}
\newcommand{\BWT}{\ensuremath{\mathsf{BWT}}}
\newcommand{\CDAWG}{\ensuremath{\mathsf{CDAWG}}}
\newcommand{\MS}{\ensuremath{\mathsf{MS}}}
\newcommand\SP[1]{\mathtt{sp}(#1)}
\newcommand\EP[1]{\mathtt{ep}(#1)} 
\newcommand\INTERVAL[1]{\mathtt{range}(#1)}
\newcommand{\runs}{r}
\newcommand{\newe}{e}
\newtheorem{lemma}{Lemma} 
\newtheorem{theorem}{Theorem}
\newtheorem{property}{Property}
\newtheorem{corollary}{Corollary}
\title{Fast Label Extraction in the CDAWG}
\author[1]{Djamal Belazzougui}
\author[2]{Fabio Cunial}
\affil[1]{DTISI, CERIST Research Center, Algiers, Algeria.}
\affil[2]{Max Planck Institute of Molecular Cell Biology and Genetics, Dresden, Germany.}
\begin{document}
\maketitle

\begin{abstract}
The compact directed acyclic word graph (CDAWG) of a string $T$ of length $n$ takes space proportional just to the number $e$ of right extensions of the maximal repeats of $T$, and it is thus an appealing index for highly repetitive datasets, like collections of genomes from similar species, in which $e$ grows significantly more slowly than $n$. We reduce from $O(m\log{\log{n}})$ to $O(m)$ the time needed to count the number of occurrences of a pattern of length $m$, using an existing data structure that takes an amount of space proportional to the size of the CDAWG. This implies a reduction from $O(m\log{\log{n}}+\mathtt{occ})$ to $O(m+\mathtt{occ})$ in the time needed to locate all the $\mathtt{occ}$ occurrences of the pattern. We also reduce from $O(k\log{\log{n}})$ to $O(k)$ the time needed to read the $k$ characters of the label of an edge of the suffix tree of $T$, and we reduce from $O(m\log{\log{n}})$ to $O(m)$ the time needed to compute the matching statistics between a query of length $m$ and $T$, using an existing representation of the suffix tree based on the CDAWG. All such improvements derive from extracting the label of a vertex or of an arc of the CDAWG using a straight-line program induced by the reversed CDAWG.

\end{abstract}

\section{Introduction}

Large, highly repetitive datasets of strings are the hallmark of the post-genomic era, and locating and counting all the exact occurrences of a pattern in such collections has become a fundamental primitive. Given a string $T$ of length $n$, the compressed suffix tree \cite{RNOtalg11,NRdcc14.1} and the compressed suffix array can be used for such purpose, and they achieve an amount of space bounded by the $k$-th order empirical entropy of $T$. However, such measure of redundancy is known not to be meaningful when $T$ is very repetitive \cite{Gagie06a}. The space taken by such compressed data structures also includes an $o(n)$ term which can be a practical bottleneck when $T$ is very repetitive. Conversely, the size of the compact directed acyclic word graph (CDAWG) of $T$ is proportional just to the number of maximal repeats of $T$ and of their right extensions (defined in Section \ref{sec:strings}): this is a natural measure of redundancy for very repetitive strings, which grows sublinearly with $n$ in practice \cite{belazzougui2015composite}.

In previous work we described a data structure that takes an amount of space proportional to the size $e_T$ of the CDAWG of $T$, and that counts all the $\mathtt{occ}$ occurrences in $T$ of a pattern of length $m$ in $O(m\log{\log{n}})$ time, and reports all such occurrences in $O(m\log{\log{n}}+\mathtt{occ})$ time \cite{belazzougui2015composite}. We also described a representation of the suffix tree of $T$ that takes space proportional to the CDAWG of $T$, and that supports, among other operations, reading the $k$ characters of the label of an edge of the suffix tree in $O(k\log{\log{n}})$ time, and computing the matching statistics between a pattern of length $m$ and $T$ in $O(m\log{\log{n}})$ time. In this paper we remove the dependency of such key operations on the length $n$ of the uncompressed, highly repetitive string, without increasing the space taken by the corresponding data structures asymptotically. We achieve this by dropping the run-length-encoded representation of the Burrows-Wheeler transform of $T$, used in \cite{belazzougui2015composite}, and by exploiting the fact that the reversed CDAWG induces a context-free grammar that produces $T$ and only $T$, as described in \cite{belazzougui2017representing}. A related grammar, already implicit in \cite{crochemore2016linear}, has been concurrently exploited in \cite{takagi2017linear} to achieve similar bounds to ours. Note that in some strings, for example in the family $T_i$ for $i \geq 0$, where $T_0=0$ and $T_i=T_{i-1}iT_{i-1}$, the length of the string grows exponentially in the size of the CDAWG, thus shaving an $O(\log{\log{n}})$ term is identical to shaving an $O(\log{e_T})$ term. 

This work can be seen as a continuation of the research program, started in \cite{belazzougui2017representing,belazzougui2015composite}, of building a fully functional, repetition-aware representation of the suffix tree based on the CDAWG.

\section{Preliminaries}

We work in the RAM model with word length at least $\log{n}$ bits, where $n$ is the length of a string that is implicit from the context. We index strings and arrays starting from one. We call \emph{working space} the maximum amount of memory that an algorithm uses in addition to its input and its output.

\subsection{Graphs}

We assume the reader to be familiar with the notions of tree and of directed acyclic graph (DAG). In this paper we only deal with \emph{ordered} trees and DAGs, in which there is a total order among the out-neighbors of every node. The $i$-th leaf of a tree is its $i$-th leaf in depth-first order, and to every node $v$ of a tree we assign the compact integer interval $[\SP{v}..\EP{v}]$, in depth-first order, of all leaves that belong to the subtree rooted at $v$. In this paper we use the expression DAG also for directed acyclic \emph{multigraphs}, allowing distinct arcs to have the same source and destination nodes. In what follows we consider just DAGs with exactly one source and one sink. We denote by $\mathcal{T}(G)$ the tree generated by DAG $G$ with the following recursive procedure: the tree generated by the sink of $G$ consists of a single node; the tree generated by a node $v$ of $G$ that is not the sink, consists of a node whose children are the roots of the subtrees generated by the out-neighbors of $v$ in $G$, taken in order. Note that: (1) every node of $\mathcal{T}(G)$ is generated by exactly one node of $G$; (2) a node of $G$ different from the sink generates one or more internal nodes of $\mathcal{T}(G)$, and the subtrees of $\mathcal{T}(G)$ rooted at all such nodes are isomorphic; (3) the sink of $G$ can generate one or more leaves of $\mathcal{T}(G)$; (4) there is a bijection, between the set of root-to-leaf paths in $\mathcal{T}(G)$ and the set of source-to-sink paths in $G$, such that every path $v_1,\dots,v_k$ in $\mathcal{T}(G)$ is mapped to a path $v'_1,\dots,v'_k$ in $G$. 

\subsection{Strings} \label{sec:strings}

Let $\Sigma=[1..\sigma]$ be an integer alphabet, let $\#=0 \notin \Sigma$ be a separator, and let $T=[1..\sigma]^{n-1}\#$ be a string. Given a string $W \in [1..\sigma]^k$, we call the \emph{reverse of $W$} the string $\REV{W}$ obtained by reading $W$ from right to left. For a string $W \in [1..\sigma]^{k}\#$ we abuse notation, denoting by $\REV{W}$ the string $\REV{W[1..k]}\#$. Given a substring $W$ of $T$, let $\mathcal{P}_{T}(W)$ be the set of all starting positions of $W$ in 
$T$. A \emph{repeat} $W$ is a string that satisfies $|\mathcal{P}_{T}(W)|>1$. We conventionally assume that the empty string occurs $n+1$ times in $T$, before the first character of $T$ and after every character of $T$, thus it is a repeat. We denote by $\Sigma^{\ell}_{T}(W)$ the set of \emph{left extensions} of $W$, i.e. the set of characters $\{a \in [0..\sigma] : |\mathcal{P}_{T}(aW)|>0\}$. Symmetrically, we denote by $\Sigma^{r}_{T}(W)$ the set of \emph{right extensions} of $W$, i.e. the set of characters $\{b \in [0..\sigma] : |\mathcal{P}_{T}(Wb)|>0\}$. A repeat $W$ is \emph{right-maximal} (respectively, \emph{left-maximal}) iff $|\Sigma^{r}_{T}(W)|>1$ (respectively, iff $|\Sigma^{\ell}_{T}(W)|>1$). It is well known that $T$ can have at most $n-1$ right-maximal repeats and at most $n-1$ left-maximal repeats. A \emph{maximal repeat} of $T$ is a repeat that is both left- and right-maximal. Note that the empty string is a maximal repeat. A \emph{near-supermaximal repeat} is a maximal repeat with at least one occurrence that is not contained in an occurrence of another maximal repeat (see e.g. \cite{gusfield1997algorithms}). A \emph{minimal absent word} of $T$ is a string $W$ that does not occur in $T$, but such that any substring of $W$ occurs in $T$. It is well known that a minimal absent word $W$ can be written as $aVb$, where $a$ and $b$ are characters and $V$ is a maximal repeat of $T$ \cite{crochemore1998automata}. It is also well known that a maximal repeat $W=[1..\sigma]^m$ of $T$ is the equivalence class of all the right-maximal strings $\{W[1..m],\dots,W[k..m]\}$ such that $W[k+1..m]$ is left-maximal, and $W[i..m]$ is not left-maximal for all $i \in [2..k]$ (see e.g. \cite{belazzougui2015composite}). By \emph{matching statistics} of a string $S$ with respect to $T$, we denote the array $\MS_{S,T}[1..|S|]$ such that $\MS_{S,T}[i]$ is the length of the longest prefix of $S[i..|S|]$ that occurs in $T$.

For reasons of space we assume the reader to be familiar with the notion of \emph{suffix trie} of $T$, as well as with the related notion of \emph{suffix tree} $\ST_T=(V,E)$ of $T$, which we do not define here. We denote by $\ell(\gamma)$, or equivalently by $\ell(u,v)$, the label of edge $\gamma=(u,v) \in E$, and we denote by $\ell(v)$ the string label of node $v \in V$. It is well known that a substring $W$ of $T$ is right-maximal iff $W=\ell(v)$ for some internal node $v$ of the suffix tree. Note that the label of an edge of $\ST_T$ is itself a right-maximal substring of $T$, thus it is also the label of a node of $\ST_T$. We assume the reader to be familiar with the notion of \emph{suffix link} connecting a node $v$ with $\ell(v)=aW$ for some $a \in [0..\sigma]$ to a node $w$ with $\ell(w)=W$. Here we just recall that inverting the direction of all suffix links yields the so-called \emph{explicit Weiner links}. Given an internal node $v$ and a symbol $a \in [0..\sigma]$, it might happen that string $a\ell(v)$ does occur in $T$, but that it is not right-maximal, i.e. it is not the label of any internal node: all such left extensions of internal nodes that end in the middle of an edge or at a leaf are called \emph{implicit Weiner links}. The \emph{suffix-link tree} is the graph whose edges are the union of all explicit and implicit Weiner links, and whose nodes are all the internal nodes of $\ST_T$, as well as additional nodes corresponding to the destinations of implicit Weiner links. We call \emph{compact suffix-link tree} the subgraph of the suffix-link tree induced by maximal repeats. 

We assume the reader to be familiar with the notion and uses of the Burrows-Wheeler transform of $T$. In this paper we use $\BWT_T$ to denote the BWT of $T$, and we use $\INTERVAL{W} = [\SP{W}..\EP{W}]$ to denote the lexicographic interval of a string $W$ in a BWT that is implicit from the context. For a node $v$ (respectively, for an edge $e$) of $\ST_T$, we use the shortcut $\INTERVAL{v}=[\SP{v}..\EP{v}]$ (respectively, $\INTERVAL{e}=[\SP{e}..\EP{e}]$) to denote $\INTERVAL{\ell(v)}$ (respectively, $\INTERVAL{\ell(e)}$). We denote by $\runs_{T}$ the number of runs in $\BWT_{T}$, and we call \emph{run-length encoded BWT} (denoted by $\RLBWT_{T}$) any representation of $\BWT_{T}$ that takes $O(\runs_{T})$ words of space, and that supports rank and select operations (see e.g. \cite{makinen2005succinct1,MakinenNSV10,SirenVMN08}).

Finally, in this paper we consider only context-free grammars in which the right-hand side of every production rule consists either of a single terminal, or of at least two nonterminals. We denote by $\pi(F)$ the sequence of characters produced by a nonterminal $F$ of a context-free grammar. Every node in the parse tree of $F$ corresponds to an interval in $\pi(F)$. Given a nonterminal $F$ and an integer interval $[i..j] \subseteq [1..|\pi(F)|]$, let a node of the parse tree from $F$ be marked iff its interval is contained in $[i..j]$. By \emph{blanket} of $[i..j]$ in $F$ we denote the set of all marked nodes in the parse tree of $F$. Clearly the blanket of $[i..j]$ in $F$ contains $O(j-i)$ nodes and edges.

\subsection{CDAWG} \label{sec:cdawg}

The \emph{compact directed acyclic word graph} of a string $T$ (denoted by $\CDAWG_T$ in what follows) is the minimal compact automaton that recognizes all suffixes of $T$ \cite{blumer1987complete,CrochemoreV97}. We denote by $\newe_T$ the number of arcs in $\CDAWG_T$, and by $h_T$ the length of a longest path in $\CDAWG_T$. We remove subscripts when string $T$ is implicit from the context. The CDAWG of $T$ can be seen as the minimization of $\ST_T$, in which all leaves are merged to the same node (the sink) that represents $T$ itself, and in which all nodes except the sink are in one-to-one correspondence with the maximal repeats of $T$ \cite{Raffinot2001}. Every arc of $\CDAWG_T$ is labeled by a substring of $T$, and the out-neighbors $w_1,\dots,w_k$ of every node $v$ of $\CDAWG_T$ are sorted according to the lexicographic order of the distinct labels of arcs $(v,w_1),\dots,(v,w_k)$. We denote again with $\ell(v)$ (respectively, with $\ell(\gamma)$) the label of a node $v$ (respectively, of an arc $\gamma$) of $\CDAWG_T$. 

Since there is a bijection between the nodes of $\CDAWG_T$ and the maximal repeats of $T$, and since every maximal repeat of $T$ is the equivalence class of a set of roots of isomorphic subtrees of $\ST_T$, it follows that the node $v$ of $\CDAWG_T$ with $\ell(v)=W$ is the equivalence class of the nodes $\{v_1,\dots,v_k\}$ of $\ST_T$ such that $\ell(v_i)=W[i..m]$ for all $i \in [1..k]$, and such that $v_k,v_{k-1},\dots,v_1$ is a maximal unary path in the suffix-link tree. The subtrees of $\ST_T$ rooted at all such nodes are isomorphic, and $\mathcal{T}(\CDAWG_T)=\ST_T$. It follows that 
a right-maximal string can be identified by the maximal repeat $W$ it belongs to, and by the length of the corresponding suffix of $W$. Similarly, a suffix of $T$ can be identified by a length relative to the sink of $\CDAWG_T$.

The equivalence class of a maximal repeat is related to the equivalence classes of its in-neighbors in the CDAWG in a specific way:

\begin{property}[\cite{belazzougui2015composite}]\label{obs:inNeighbors}
Let $w$ be a node in the CDAWG with $\ell(w) = W \in [1..\sigma]^m$, and let $\mathcal{S}_w=\{W[1..m]$, $\dots$, $W[k..m]\}$ be the right-maximal strings that belong to the equivalence class of node $w$. Let $\{v^1,\dots,v^t\}$ be the in-neighbors of $w$ in $\CDAWG_T$, and let $\{V^1,\dots,V^t\}$ be their labels. Then, $\mathcal{S}_w$ is partitioned into $t$ disjoint sets $\mathcal{S}_w^1,\dots,\mathcal{S}_w^t$ such that $\mathcal{S}_w^i = \{W[x^i+1..m],W[x^i+2..m],\dots,W[x^i+|\mathcal{S}_{v^i}|..m]\}$, and the right-maximal string $V^{i}[p..|V^i|]$ labels the parent of the locus of the right-maximal string $W[x^i+p..m]$ in the suffix tree, for all $p \in [1..|\mathcal{S}_{v^i}|]$.
\end{property}

Property \ref{obs:inNeighbors} partitions every maximal repeat of $T$ into left-maximal factors, and applied to the sink $w$ of $\CDAWG_T$, it partitions $T$ into $t$ left-maximal factors, where $t$ is the number of in-neighbors of $w$, or equivalently the number of near-supermaximal repeats of $T$. Moreover, by Property \ref{obs:inNeighbors}, it is natural to say that in-neighbor $v^i$ of node $w$ is smaller than in-neighbor $v^j$ of node $w$ iff $x^i<x^j$, or equivalently if the strings in $\mathcal{S}^i_w$ are longer than the strings in $\mathcal{S}^j_w$. We call $\REV{\CDAWG}_T$ the ordered DAG obtained by applying this order to the reversed $\CDAWG_T$, i.e. to the DAG obtained by inverting the direction of all arcs of $\CDAWG_T$, and by labeling every arc $(v,w)$, where $w$ is the source of $\CDAWG_T$, with the first character of the string label of arc $(w,v)$ in $\CDAWG_T$. Note that some nodes of $\REV{\CDAWG}_T$ can have just one out-neighbor: for brevity we denote by $\REV{\CDAWG}_T$ the graph obtained by collapsing every such node $v$, i.e. by redirecting to the out-neighbor of $v$ all the arcs directed to $v$, propagating to such arcs the label of the out-neighbor of $v$, if any. 

The source of $\REV{\CDAWG}_T$ is the sink of $\CDAWG_T$, which is the equivalence class of all suffixes of $T$ in string order. There is a bijection between the distinct paths of $\REV{\CDAWG}_T$ and the suffixes of $T$; thus, the $i$-th leaf of $\mathcal{T}(\REV{\CDAWG}_T)$ in depth-first order corresponds to the $i$-th suffix of $T$ in string order. Moreover, the last arc in the source-to-sink path of $\REV{\CDAWG}_T$ that corresponds to suffix $T[i..|T|]$ is labeled by character $T[i]$. It follows that:

\begin{property}[\cite{belazzougui2017representing}] \label{obs:grammar}
$\REV{\CDAWG}_T$ is a context-free grammar that generates $T$ and only $T$, and $\mathcal{T}(\REV{\CDAWG}_T)$ is its parse tree. Let $v$ be a node of $\CDAWG_T$ with $t$ in-neighbors, and let $\ell(v)=VW$, where $W$ is the longest proper suffix of $\ell(v)$ that is a maximal repeat (if any). Then, $v$ corresponds to a nonterminal $F$ of the grammar such that $\pi(F) = V = \pi(F_1) \cdots \pi(F_t)$, and $F_i$ are the nonterminals that correspond to the in-neighbors of $v$, for all $i \in [1..t]$.
\end{property}

Note that the nonterminals of this grammar correspond to unary paths in the suffix-link tree of $T$, i.e. to edges in the suffix tree of $\REV{T}$. This parallels the grammar implicit in \cite{crochemore2016linear} and explicit in \cite{takagi2017linear}, whose nonterminals correspond to unary paths in the \emph{suffix trie} of $T$, i.e. to edges in the suffix tree of $T$.

\subsection{Counting and Locating with the CDAWG} \label{sec:locatingWithCDAWG}

$\CDAWG_T$ can be combined with $\RLBWT_T$ to build a data structure that takes $O(\newe_{T})$ words of space, and that counts all the $\mathtt{occ}$ occurrences of a pattern $P$ of length $m$ in $O(m\log{\log{n}})$ time, and reports all such occurrences in $O(m\log{\log{n}} + \mathtt{occ})$ time \cite{belazzougui2015composite}. 

Specifically, for every node $v$ of the CDAWG, we store $|\ell(v)|$ in a variable $v.\mathtt{length}$. Recall that an arc $(v,w)$ in the CDAWG means that maximal repeat $\ell(w)$ can be obtained by extending maximal
repeat $\ell(v)$ to the right \emph{and to the left}. Thus, for every arc $\gamma=(v,w)$ of the CDAWG, we store the first character of $\ell(\gamma)$ in a variable $\gamma.\mathtt{char}$, and we store the length of the right extension implied by $\gamma$ in a variable $\gamma.\mathtt{right}$. The length $\gamma.\mathtt{left}$ of the left extension implied by $\gamma$ can be computed by $w.\mathtt{length}-v.\mathtt{length}-\gamma.\mathtt{right}$. For every arc of the CDAWG that connects a maximal repeat $W$ to the sink, we store just $\gamma.\mathtt{char}$ and the starting position $\gamma.\mathtt{pos}$ of string $W \cdot \gamma.\mathtt{char}$ in $T$. The total space used by the CDAWG is $O(e_{T})$ words, and the number of runs in $\BWT_{T}$ can be shown to be $O(e_{T})$ as well \cite{belazzougui2015composite}.

We use the RLBWT to count the number of occurrences of $P$ in $T$, in $O(m\log{\log{n}})$ time: if this number is not zero, we use the CDAWG to report all the $\mathtt{occ}$ occurrences of $P$ in $O(\mathtt{occ})$ time, using a technique already sketched in
\cite{crochemore1997automata}. Specifically, since we know that $P$ occurs in $T$, we perform a \emph{blind search} for $P$ in the CDAWG, as follows. We keep a variable $i$, initialized to zero, that stores the length of the prefix of $P$ that we have matched so far, and we keep a variable $j$, initialized to one, that stores the starting position of $P$ inside the last maximal repeat encountered during the search. For every node $v$ in the CDAWG, we choose the arc $\gamma$ such that $\gamma.\mathtt{char}=P[i+1]$ in constant time using hashing, we increment $i$ by $\gamma.\mathtt{right}$, and we increment $j$ by $\gamma.\mathtt{left}$. If the search leads to the sink by an arc $\gamma$, we report $\gamma.\mathtt{pos}+j-1$ and we stop. If the search ends at a node $v$ that is associated with a maximal repeat $W$, we determine all the occurrences of $W$ in $T$ by performing a depth-first traversal of all nodes reachable from $v$ in the CDAWG, updating variables $i$ and $j$ as described above, and reporting $\gamma.\mathtt{pos}+j-1$ for every arc $\gamma$ that leads to the sink. Clearly the total number of nodes and arcs reachable from $v$ is $O(\mathtt{occ})$.

Note that performing the blind search for a pattern in the CDAWG is analogous to a descending walk on the suffix tree, thus we can compute the BWT interval of every node of $\ST_T$ that we meet during the search, by storing in every arc of the CDAWG a suitable offset between BWT intervals, as described in the following property:

\begin{property}[\cite{belazzougui2015composite}]\label{obs:bwtIntervals}
Let $\{W[1..m],\dots,W[k..m]\}$ be the right-maximal strings that belong to the equivalence class of maximal repeat $W \in [1..\sigma]^m$ of string $T$, and let $\INTERVAL{W[i..m]}=[p_i..q_i]$ for $i \in [1..k]$. Then $|q_i-p_i+1|=|q_j-p_j+1|$ for all $i$ and $j$ in $[1..k]$. Let $c \in [0..\sigma]$, and let $\INTERVAL{W[i..m]c}=[x_i..y_i]$ for $i \in [1..k]$. Then, $x_i = p_i+x_1-p_1$ and $y_i = p_i+y_1-p_1$.
\end{property}

Properties \ref{obs:inNeighbors} and \ref{obs:bwtIntervals}, among others, can be used to implement a number of suffix tree operations in $O(1)$ or $O(\log{\log{n}})$ time, using data structures that take just $O(\newe_T)$ or $O(\newe_T+\newe_{\REV{T}})$ words of space \cite{belazzougui2017representing,belazzougui2015composite}. Among other information, such data structures store a pointer, from each node $v$ of the CDAWG, to the longest proper suffix of $\ell(v)$ (if any) that is a maximal repeat. Note that such \emph{suffix pointers} can be charged to suffix links in $\ST_T$, thus they take overall $O(\newe_T)$ words of space.

\section{Faster Count and Locate Queries in the CDAWG} \label{sec:fasterCount}

\begin{algorithm}[t]
\SetCommentSty{normal}
{$S \gets \mbox{empty stack}$}\;
{$S.\mathtt{push}( (u',0,0) )$}\;
{$\mathtt{extracted} \gets 0$}\;
\Repeat{$\mathtt{extracted}=k$}{
   {$t \gets S.\mathtt{top}$}\;
   \uIf{$t.\mathtt{lastChild}<|t.\mathtt{node}.\mathtt{outNeighbors}|$}{
      {$t.\mathtt{lastChild} \gets t.\mathtt{lastChild}+1$}  \label{line:pointer}\;
      {$v' \gets t.\mathtt{node}.\mathtt{outNeighbors}[t.\mathtt{lastChild}]$}\;
      \uIf{$v'=G.\mathtt{sink}$}{
         {$\mathtt{print}(\mathtt{label}(t.\mathtt{node},v'))$}\;
         {$\mathtt{extracted} \gets \mathtt{extracted}+1$}\;
      }
      \uElseIf{$t.\mathtt{lastChild}=1$}{
         {$t.\mathtt{depth} \gets 1$}\;
         {$S.\mathtt{push}( (\mathtt{levelAncestor}(t.\mathtt{node},t.\mathtt{depth}),0,t.\mathtt{depth}) )$}\;
      }
      \lElse{
         {$S.\mathtt{push}( (v',0,0) )$}
      }
   }
   \Else{
      {$S.\mathtt{pop}$}\;
      \lIf{$S=\varnothing$}{
         \Return{$\mathtt{extracted}$}
      }
      {$t \gets S.\mathtt{top}$}\;
      \lIf{$t.\mathtt{depth}<t.\mathtt{node}.\mathtt{depth}$}{
         {$t.\mathtt{depth} \gets t.\mathtt{depth}+1$}
      }
      \lIf{$t.\mathtt{depth}<t.\mathtt{node}.\mathtt{depth}$}{
         {$S.\mathtt{push}( (\mathtt{levelAncestor}(t.\mathtt{node},t.\mathtt{depth}),1,t.\mathtt{depth}) )$}
      }
   }
}
\Return{$k$}\;
\caption{\label{algo:scan} Reading the first $k$ characters of the string produced by a nonterminal $F$ of a straight-line program represented as a DAG $G$. $F$ corresponds to node $u'$ of $G$. Notation follows Lemma \ref{lemma:scan}. 
}
\end{algorithm}

In this paper we focus on deciding whether a pattern $P$ occurs in $T$, a key step in the blind search of Section \ref{sec:locatingWithCDAWG}. Rather than using the RLBWT for such decision, we exploit Property \ref{obs:grammar} and use the grammar induced by $\REV{\CDAWG}_T$.

Our methods will require a data structure, of size linear in the grammar, that extracts in $O(k)$ time the first $k$ characters of the string produced by a nonterminal. Previous research described an algorithm that extracts the whole string produced by a nonterminal in linear time, using just constant working space, by manipulating pointers in the grammar \cite{gasieniec2003time}. This solution does not guarantee linear time when just a prefix of the string is extracted. A linear-size data structure with the stronger guarantee of constant-time extraction per character has also been described \cite{gasieniec2005real}, and this solution can be used as a black box in our methods. However, since we just need amortized linear time, we describe a significantly simpler alternative that needs just a level ancestor data structure (an idea already implicit in~\cite{lohrey2016traversing}) and that will be useful in what follows:

\begin{lemma} \label{lemma:scan}
Let $G=(V,E)$ be the DAG representation of a straight-line program. There is a data structure that: (1) given an integer $k$ and a nonterminal $F$, allows one to read the first $k$ characters of $\pi(F)$ in $O(k)$ time and $O(\min\{k,h\})$ words of working space, where $h$ is the height of the parse tree of $F$; 
(2) given a string $S$ and a nonterminal $F$, allows one to compute the length $k$ of the longest prefix of $S$ that matches a prefix of $\pi(F)$, in $O(k)$ time and $O(\min\{k,h\})$ words of working space. Such data structure takes $O(|V|)$ words of space.
\end{lemma}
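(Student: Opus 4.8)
The plan is to analyze Algorithm \ref{algo:scan} directly and argue that it performs a left-to-right depth-first traversal of exactly the portion of the parse tree of $F$ that is needed to emit the first $k$ characters, spending amortized constant time per character. The DAG $G$ is a straight-line program, so each non-sink node $u'$ corresponds to a nonterminal whose right-hand side is the concatenation of its out-neighbors in order; however, because $G$ has been collapsed (nodes with a single out-neighbor are contracted, and the arc inherits a label giving the terminal symbol at the sink), the parse tree $\mathcal{T}(G)$ restricted to the subtree rooted at $u'$ is not the same graph as the naive derivation tree, and the recursive structure from Property \ref{obs:grammar} must be used to interpret it. First I would make precise what the stack frames $(\mathtt{node},\mathtt{lastChild},\mathtt{depth})$ represent: a frame with $\mathtt{node}=v$ and $\mathtt{depth}=d$ stands for the node of $\mathcal{T}(G)$ reached by following the \emph{suffix pointer} $d$ times from the maximal repeat $v$, i.e. the right-maximal string that is a suffix of $\ell(v)$ of the appropriate length, and iterating $\mathtt{depth}$ from $1$ up to $v.\mathtt{depth}$ (the number of right-maximal strings in $v$'s equivalence class) is exactly how Property \ref{obs:inNeighbors} says the children of $v$'s in-neighbors partition $v$. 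This is the role of the \texttt{levelAncestor} calls, which here are level-ancestor queries in the suffix-link tree / parse-tree structure and cost $O(1)$ each with a linear-space level-ancestor data structure, giving the $O(|V|)$ space bound.

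The core correctness argument is an invariant: at the top of the \texttt{Repeat} loop, the contents of the stack encode the current root-to-node path in $\mathcal{T}(F)$, the leaves to the left of this path have already been printed in left-to-right order, $\mathtt{extracted}$ counts them, and no other work has been done. I would verify that each branch of the conditional maintains this: advancing $\mathtt{lastChild}$ moves to the next child; reaching the sink emits one terminal and increments $\mathtt{extracted}$; the $\mathtt{lastChild}=1$ special case together with the $\mathtt{depth}$ bookkeeping on pop implements the iteration over the $\mathtt{depth}$ levels demanded by Property \ref{obs:grammar}; popping with an empty stack afterwards means the whole of $\pi(F)$ has been output, which is why the final \Return is $\mathtt{extracted}$ rather than $k$ in that case. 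Termination is immediate because $\mathtt{extracted}$ strictly increases every time a sink arc is taken and the traversal is finite; the loop exits either when $\mathtt{extracted}=k$ or when the stack empties.

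For the running time I would use an amortization argument. Every unit of ``descent'' work — a push, a \texttt{levelAncestor} call, an advance of $\mathtt{lastChild}$, a pop — can be charged to an edge of the blanket of $[1..k]$ in $F$ (in the sense defined in Section \ref{sec:strings}), because the traversal never enters a subtree all of whose leaves lie at positions $>k$: the algorithm only recurses into the out-neighbor it is about to explore and stops as soon as $k$ leaves are printed. Since the blanket of an interval of length $k$ has $O(k)$ nodes and edges, the total work is $O(k)$; the working space is the maximum stack depth, which is at most the height $h$ of the parse tree of $F$ and at most $k$ (a path to the $k$-th leaf has length $\le k$ only in the degenerate sense, but in any case the number of distinct stack frames never exceeds the number of leaves printed plus $h$), giving $O(\min\{k,h\})$ words. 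Part (2) of the lemma follows by the same traversal with one change: instead of unconditionally printing the terminal at each sink arc, compare it against the next unmatched character of $S$; halt and return the number of matched characters on the first mismatch, on reaching $|S|$, or on exhausting $\pi(F)$. The analysis is identical with $k$ now denoting the length of the matched prefix, since the traversal visits exactly the blanket of $[1..k+1]$ (or less) in $F$.

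I expect the main obstacle to be the bookkeeping around the collapsed single-out-neighbor nodes and the $\mathtt{depth}$ field: one must show that the $\mathtt{levelAncestor}(t.\mathtt{node},t.\mathtt{depth})$ frames, pushed first when $t.\mathtt{lastChild}=1$ and then re-pushed on each pop until $t.\mathtt{depth}$ reaches $t.\mathtt{node}.\mathtt{depth}$, enumerate precisely the sequence of right-maximal strings $W[1..m], W[2..m], \dots$ in the equivalence class of $t.\mathtt{node}$, in the correct order, and that concatenating the strings produced by their respective out-neighbor subtrees yields $\ell(t.\mathtt{node})$ — i.e. that the algorithm's traversal of $\mathcal{T}(G)$ agrees with the parse tree of the grammar of Property \ref{obs:grammar}. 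Once this correspondence is nailed down, correctness, termination, the time bound via the blanket, and the space bound are all routine.
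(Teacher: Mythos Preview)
Your proposal rests on a misreading of what the $\mathtt{depth}$ field and the $\mathtt{levelAncestor}$ calls are doing. Lemma~\ref{lemma:scan} is stated for an \emph{arbitrary} straight-line program; it has nothing to do with suffix pointers, equivalence classes of right-maximal strings, or Properties~\ref{obs:inNeighbors}--\ref{obs:grammar}. The data structure the paper actually builds is this: for every non-sink node of $G$, mark the arc to its \emph{first} out-neighbor; the marked arcs form a spanning tree $\tau$ of $G$, rooted at the sink, and $u'.\mathtt{depth}$ is the depth of $u'$ in $\tau$. The $\mathtt{levelAncestor}$ queries are on $\tau$, not on any suffix-link tree. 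The point of $\tau$ is that the path in $\tau$ from any node $u'$ to the root is exactly the ``leftmost descent'' from $u'$ to the sink in $G$, so $\mathtt{levelAncestor}(u',d)$ gives the node at depth $d$ on that leftmost path in constant time.

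This is the idea you are missing, and without it the algorithm cannot be analyzed. When the traversal first enters a subtree (the $\mathtt{lastChild}=1$ branch), it does \emph{not} push the whole leftmost path onto the stack; it jumps straight to the node just above the sink ($\mathtt{levelAncestor}(t.\mathtt{node},1)$), prints a character, and then, on each subsequent pop, climbs that leftmost path one level at a time via $\mathtt{levelAncestor}(t.\mathtt{node},t.\mathtt{depth})$, each time processing only the non-first children of the current level. Every push is thus paid for by a character already printed, which is what gives both the $O(k)$ time bound and the $O(\min\{k,h\})$ stack bound. Your interpretation --- that the $\mathtt{depth}$ counter iterates through the right-maximal strings $W[1..m],W[2..m],\dots$ of an equivalence class --- does not match the algorithm and would not yield a proof even for the specific grammar of $\REV{\CDAWG}_T$, let alone for a general SLP. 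Once you replace the suffix-link-tree picture with the leftmost-arc spanning tree $\tau$, the invariant you sketch (stack encodes current path, leaves to the left already printed) becomes correct and the rest of your amortization via the blanket goes through essentially as the paper has it.
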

\begin{proof}
We mark the arc of $G$ that connects each node $v'$ to its first out-neighbor. The set of all marked arcs induces a spanning tree $\tau$ of $G$, rooted at the sink and arbitrarily ordered \cite{gasieniec2005real}. 
In what follows we identify the nodes of $\tau$ with the corresponding nodes of $G$. We build a data structure that supports \emph{level ancestor queries} on $\tau$: given a node $v$ and an integer $d$, such data structure returns the ancestor $u$ of $v$ in $\tau$ such that the path from the root of $\tau$ to $u$ contains exactly $d$ edges. The level ancestor data structure described in \cite{bender2004level,berkman1994finding} takes $O(|V|)$ words of space and it answers queries in constant time.
To read the first $k$ characters of string $\pi(F)=W$, we explore the blanket of $W[1..k]$ in $F$ recursively, as described in Algorithm \ref{algo:scan}. The tuples in the stack used by the algorithm have the following fields: $(\mathtt{node},\mathtt{lastChild},\mathtt{depth})$, where $\mathtt{node}$ is a node of $G$, $u'.\mathtt{outNeighbors}$ is the sorted list of out-neighbors of node $u'$ in $G$, $u'.\mathtt{depth}$ is the depth of $u'$ in $\tau$, and function $\mathtt{label}(u',v')$ returns the character that labels arc $(u',v')$ in $G$. Algorithm \ref{algo:scan} returns the number of characters read, which might be smaller than $k$.
A similar procedure can be used for computing the length of the longest prefix of $\pi(F)$ that matches a prefix of a query string. Every type of operation in Algorithm \ref{algo:scan} takes constant time, it can be charged to a distinct character in the output, and it pushes at most one element on the stack. Thus, the stack contains $O(k)$ tuples at every step of the algorithm. It is also easy to see that the stack never contains more elements than the length of the longest path from the node of $G$ that corresponds to $F$ to the sink. 
\end{proof}

If necessary, Algorithm \ref{algo:scan} can be modified to take constant time per character:

\begin{corollary} \label{cor:scan}
Let $G=(V,E)$ be the DAG representation of a straight-line program. There is a data structure that takes $O(|V|)$ words of space and that, given a nonterminal $F$, allows one to read the characters of $\pi(F)$, from left to right, in constant time per character and in $O(\min\{k,h\})$ words of working space, where $h$ is the height of the parse tree of $F$.
\end{corollary}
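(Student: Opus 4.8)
The plan is to de-amortize Algorithm~\ref{algo:scan} by decoupling the \emph{production} of characters from their \emph{consumption} through a bounded lookahead buffer. Concretely, I would run the traversal of Lemma~\ref{lemma:scan}(1) with no termination bound, but redirect its \texttt{print} instruction so that each emitted character is appended to a FIFO queue $Q$; and I would expose to the caller a routine that, on each request for the next character of $\pi(F)$, first executes a fixed constant number $c$ of traversal steps \emph{unless} $|Q|$ already exceeds a threshold $\Theta(h)$ (in which case it does none), and then dequeues and returns the front of $Q$. The throttling keeps $|Q| = O(h)$ at all times, so the total space stays $O(|V|)$ and the working space stays $O(\min\{k,h\})$, while the total number of traversal steps performed after $k$ characters have been delivered is $O(k)$ by the analysis in the proof of Lemma~\ref{lemma:scan}. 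Hence the entire claim reduces to showing that the delivery routine is worst-case $O(1)$ after an initial $O(\min\{k,h\})$ priming phase, which in turn reduces to: $Q$ never becomes empty while $\pi(F)$ is not yet exhausted.

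For this I would isolate the only source of burstiness in Algorithm~\ref{algo:scan}. The descent side -- moving from having finished one subtree to the leftmost leaf of the next one -- costs only $O(1)$ traversal steps per emitted character, because the level-ancestor structure on $\tau$ lets the algorithm jump directly to the last non-sink node of a first-child chain rather than pushing that whole chain; so the bursts come entirely from the ascent side, i.e. from maximal runs of \texttt{pop} operations that climb out of a completed subtree without emitting anything. The key structural fact I would prove is that a pop-burst of length $\ell$ is immediately preceded, in the traversal, by the complete enumeration of a subtree of $\mathcal{T}(G)$ that has $\Omega(\ell)$ leaves: this holds because every internal node of $\mathcal{T}(G)$ has at least two children (the grammar restriction), so each of the $\ell$ nodes climbed through during the burst contributes at least one already-enumerated leaf of that subtree, and all those leaves have already been pushed into $Q$. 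Combining this with the $O(1)$ descent cost, a throttle at $2h$, and a speedup constant $c$ chosen large enough that in a ``cheap'' region $Q$ refills to $2h$ before the next burst begins, one gets that a burst of length $\ell \le h$ drains $Q$ by only $O(\ell/c) \le O(h/c) < h$, so $Q$ stays in $[1, O(h)]$; the priming phase just runs the producer until $|Q|$ first reaches $\Theta(\min\{k,h\})$.

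The step I expect to be the main obstacle is making the two structural claims above rigorous against the actual control flow of Algorithm~\ref{algo:scan}: verifying that the level-ancestor jumps plus the chain-unwinding performed in the \texttt{pop} branch really do amount to $O(1)$ amortized-into-real-time work per leaf on the descent side, and pinning down exactly which nodes a pop-burst traverses so that the ``$\Omega(\ell)$ already-emitted leaves'' bound goes through uniformly (in particular handling the re-pushes of first-child chains with the \texttt{lastChild}$=1$ flag). If one prefers to sidestep this bookkeeping entirely, the corollary also follows by replacing Algorithm~\ref{algo:scan} outright with the real-time grammar-traversal data structure of~\cite{gasieniec2005real}, or with the constant-delay grammar-compressed tree navigation of~\cite{lohrey2016traversing}, either of which provides $O(|V|)$-space, constant-delay enumeration of the leaves of $\mathcal{T}(G)$ as a black box; the buffering argument above is only what is needed to show that our own simpler structure already suffices.
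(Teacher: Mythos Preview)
Your approach is a genuine de-amortization via a bounded-size output buffer plus a credit argument, and it can be made to work, but it is considerably heavier than what the paper actually does. The paper observes that Algorithm~\ref{algo:scan} already has constant delay between consecutive prints \emph{except} for maximal runs of \texttt{pop} operations, and removes those runs by a one-line modification: after Line~\ref{line:pointer} advances $t.\mathtt{lastChild}$, if this was the last child of $t.\mathtt{node}$, pop $t$ \emph{before} pushing the new tuple. This is just tail-recursion elimination: a tuple whose children are exhausted is discarded the moment we recurse into its last child, so when a subtree finishes, the tuple underneath on the stack is guaranteed still to have unexplored children, and no cascade of pops can occur. No buffer, no throttle, no priming phase, and no credit accounting are needed; the working-space bound is unchanged because the stack can only get shorter.

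What your route buys is generality: buffering plus ``run the producer $c$ steps per request'' de-amortizes any traversal whose amortized cost-per-output holds on every prefix, without touching the traversal code. What it costs you is exactly the obstacle you flag: you must verify that the amortized bound of Lemma~\ref{lemma:scan} really is a prefix bound (so that $Q$ cannot empty), and you must track how the level-ancestor re-pushes in the \texttt{pop} branch interact with your notion of a ``pop-burst''. The paper's preventive-pop trick sidesteps all of this and gives worst-case constant delay directly; you may want to adopt it instead, or at least mention it as the simpler realization before falling back on \cite{gasieniec2005real} or \cite{lohrey2016traversing}.
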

\begin{proof}
After having printed character $i$ of $\pi(F)$, the time Algorithm \ref{algo:scan} has to wait before printing character $i+1$ is always bounded by a constant, except when the procedure repeatedly pops tuples from the stack. This can be avoided by preventively popping a tuple $t$ for which $t.\mathtt{lastChild}$ has reached $|t.\mathtt{node}.\mathtt{outNeighbors}|$ after Line \ref{line:pointer} is executed, before pushing new tuples on the stack. 
\end{proof}

Moreover, Lemma \ref{lemma:scan} can be generalized to weighted DAGs, by storing in each node of $\tau$ the sum of weights of all edges from the node to the root of $\tau$, by saving sums of weights in the tuples on the stack, and by summing and subtracting the weights of the arcs of the DAG:

\begin{corollary} \label{cor:generalizedScan}
Let $G=(V,E)$ be an ordered DAG with a single sink and with weights on the arcs, and let the \emph{weight of a path} be the sum of weights of all its arcs. There is a data structure that, given an integer $k$ and a node $v$, reports the weights of the first $k$ paths from $v$ to the sink in preorder, in constant time per path and in $O(\min\{k,h\})$ words of working space, where $h$ is the length of a longest path from $v$ to the sink. Such data structure takes $O(|V|)$ words of space.
\end{corollary}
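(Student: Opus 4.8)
The plan is to reuse, essentially verbatim, the data structure of Lemma~\ref{lemma:scan}: mark in $G$ the arc from each node to its first out-neighbor, so that the marked arcs form a spanning tree $\tau$ of $G$ rooted at the sink, and build on $\tau$ the constant-time level-ancestor data structure of \cite{bender2004level,berkman1994finding}, which uses \Oh{|V|} words. To this I would add a single annotation: in each node $x$ of $\tau$ store $s(x)$, the sum of the weights of the arcs on the path from $x$ to the root of $\tau$. All the values $s(x)$ are computed by one traversal of $\tau$ and occupy \Oh{|V|} words, so the total space stays \Oh{|V|}.

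Next I would run the DFS of Algorithm~\ref{algo:scan} started at $v$, with two modifications. First, since the source-to-sink paths of $G$ visited by the DFS are exactly the $v$-to-sink paths, in preorder, what Algorithm~\ref{algo:scan} does when it reaches $G.\mathtt{sink}$ (``print the character of the entering arc'') becomes ``output the weight of the current $v$-to-sink path'', and the procedure halts after $k$ such outputs, or when the stack empties. Second, each stack tuple carries one extra field, the weight of the partial path from $v$ down to the node stored in the tuple, maintained as follows: when a tuple for $v'$ is pushed after traversing an arc $(u',v')$ of $G$, its weight field equals that of the parent tuple plus the weight of $(u',v')$; when a tuple is pushed via a level-ancestor query returning a node $a$ that is an ancestor in $\tau$ of the node $u'$ in the parent tuple, the stretch of the path between $a$ and $u'$ consists exactly of marked ($\tau$-)arcs, whose total weight is $s(u')-s(a)$, and the weight field is updated by adding or subtracting this quantity, in lockstep with the $\mathtt{depth}$ updates already performed by Algorithm~\ref{algo:scan}. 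To get worst-case, rather than amortized, constant time per reported path, I would also apply the preventive-pop trick of Corollary~\ref{cor:scan}.

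The complexity analysis is then identical to that of Lemma~\ref{lemma:scan} and Corollary~\ref{cor:scan}: every step is constant time and is either an output or charged to a later output, the stack holds \Oh{\min\{k,h\}} tuples where $h$ bounds the length of a $v$-to-sink path, and the total space is \Oh{|V|}. I expect the main obstacle to be the correctness of the weight bookkeeping across level-ancestor jumps: one must check that the $s(u')-s(a)$ corrections, together with the single weight added per ordinary arc push, account for every arc of the reported $v$-to-sink path exactly once --- in particular the arc entering the sink, and the case in which $G$ contains nodes of out-degree one (which is either dispatched by the same argument or removed in advance by contracting each such node into its outgoing arc and summing weights, reducing to the straight-line-program setting of Lemma~\ref{lemma:scan}).
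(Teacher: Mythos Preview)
Your proposal is correct and matches the paper's own argument essentially verbatim: the paper states the corollary without a formal proof, merely remarking that Lemma~\ref{lemma:scan} generalizes to weighted DAGs ``by storing in each node of $\tau$ the sum of weights of all edges from the node to the root of $\tau$, by saving sums of weights in the tuples on the stack, and by summing and subtracting the weights of the arcs of the DAG'', which is exactly the $s(x)$ annotation, the extra weight field in the stack tuples, and the $s(u')-s(a)$ corrections you describe. Your additional remark about invoking the preventive-pop trick of Corollary~\ref{cor:scan} to obtain worst-case constant time per path is also the intended route.
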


Lemma \ref{lemma:scan} is all we need to verify in linear time whether a pattern occurs in the indexed text:

\begin{theorem} \label{thm:verification}
Let $T \in [1..\sigma]^n$ be a string. There is a data structure that takes $O(\newe_T)$ words of space, and that counts (respectively, reports) all the $\mathtt{occ}$ occurrences of a pattern $P \in [1..\sigma]^m$ in $O(m)$ time (respectively, in $O(m+\mathtt{occ})$ time) and in $O(\min\{m,h_T\})$ words of working space.
\end{theorem}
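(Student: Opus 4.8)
The plan is to replace the use of $\RLBWT_T$ in the counting procedure of Section~\ref{sec:locatingWithCDAWG} with a verification step based on the grammar induced by $\REV{\CDAWG}_T$. Recall from the blind-search technique that, once we know $P$ occurs in $T$, we can report all $\mathtt{occ}$ occurrences in $O(\mathtt{occ})$ time and locate a single occurrence in $O(m)$ time (the blind search follows one arc per CDAWG node along a path of length $O(m)$, using hashing to pick the arc with $\gamma.\mathtt{char}=P[i+1]$, and accumulating $i$ and $j$). The only thing the blind search does \emph{not} do is detect a mismatch: it will happily walk to some node or to the sink even when $P$ does not occur. So the missing ingredient is exactly a check, after the blind search produces a candidate occurrence position, that $P$ really equals the corresponding length-$m$ substring of $T$.

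First I would run the blind search for $P$ in $\CDAWG_T$. If at some node no arc has $\gamma.\mathtt{char}=P[i+1]$, then $P$ does not occur and we answer $0$. Otherwise the search terminates either at the sink via an arc $\gamma$, yielding a candidate start position $p=\gamma.\mathtt{pos}+j-1$, or at an internal node $v$ representing a maximal repeat $W$, in which case we extend the walk (as in the locate procedure) down to any leaf to obtain one candidate start position $p$ of an occurrence of $W$; in both cases $p$ is a position such that, \emph{if} $P$ occurs in $T$ at all, it occurs starting at $p$. Now I need to verify $T[p..p+m-1]=P$. For this I invoke Property~\ref{obs:grammar}: $\REV{\CDAWG}_T$ is a straight-line program generating $T$, with sink/source being the node of $\CDAWG_T$ corresponding to the whole string. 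To extract $T[p..p+m-1]$ I first need to navigate to the right starting point inside the parse tree — this is a standard prefix-extraction/descent argument: using the stored lengths at nodes of $\REV{\CDAWG}_T$ (equivalently the $\gamma.\mathtt{right}$/$\gamma.\mathtt{left}$ and $v.\mathtt{length}$ data already maintained, and the suffix pointers), I descend from the source following the unique path whose accumulated length reaches $p-1$, then apply Lemma~\ref{lemma:scan}(2) from the reached nonterminal to compare the next characters against $P$, which runs in $O(m)$ time and $O(\min\{m,h_T\})$ words of working space; the comparison succeeds on all $m$ characters iff $P$ occurs. Since the CDAWG has $O(\newe_T)$ nodes and arcs and the level-ancestor structure of Lemma~\ref{lemma:scan} costs $O(|V|)=O(\newe_T)$ words, the total space remains $O(\newe_T)$ words. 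If verification succeeds we answer accordingly (for counting we additionally read $\occ$ off the node $v$, which can be stored; for reporting we run the $O(\occ)$-time depth-first traversal of Section~\ref{sec:locatingWithCDAWG}); if it fails we answer $0$.

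The main obstacle I anticipate is making the descent to position $p$ in the $\REV{\CDAWG}_T$ grammar genuinely $O(m)$ rather than $O(h_T)$ or worse, and doing it with only $O(\newe_T)$ words of space: a naive descent following one arc per level costs $\Theta(h_T)$, which is not $O(m)$ in general since $h_T$ can exceed $m$. The clean way around this is to observe that we do not actually need to navigate \emph{to} $p$ and then extract forward a length-$m$ window in the huge string $T$; instead, the blind search already ended at a specific node $v$ (or arc to the sink) of $\CDAWG_T$ together with the offset $j$ of $P$ inside $\ell(v)$, and $\ell(v)$ — via the suffix-pointer decomposition and Property~\ref{obs:grammar} — is itself produced by a nonterminal of the $\REV{\CDAWG}_T$ grammar (a maximal repeat is a single nonterminal's string by that property, up to the suffix-pointer suffix $W$, which is handled by one pointer chase). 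So verifying reduces to comparing $P$ against the length-$m$ substring of $\pi(F_v)$ starting at offset $j$, for the nonterminal $F_v$ associated with $v$; Lemma~\ref{lemma:scan} extended to start from an arbitrary offset (extract a prefix of length $j-1+m$ and compare the tail, or, better, skip the first $j-1$ characters during the blanket traversal) gives $O(j-1+m)=O(m)$ time, using $j-1\le|\ell(v)|-1$ — but this could exceed $m$. To keep it $O(m)$, I would instead have the blind search maintain, along the walk, a pointer into the parse tree of $\REV{\CDAWG}_T$ corresponding to the \emph{current matched prefix} of length $i$, so that the candidate position is reached incrementally with $O(1)$ amortized work per matched character; then the verification is a single left-to-right scan of exactly $m$ characters via Lemma~\ref{lemma:scan}(2). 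Reconciling the two orders (the blind search descends the suffix tree / CDAWG, the grammar descends $\REV{\CDAWG}_T$) so that this incremental pointer is maintainable in $O(1)$ amortized time is the delicate part, and is where I would spend most of the proof; everything else is bookkeeping over structures already defined in Section~\ref{sec:locatingWithCDAWG} and Lemma~\ref{lemma:scan}.
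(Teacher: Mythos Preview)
Your overall plan---run the blind search, then verify the candidate match using the grammar of Property~\ref{obs:grammar} rather than the $\RLBWT$---is the paper's plan. But you do not actually carry out the verification in $O(m)$ time: you list three mechanisms, correctly reject the first two, and leave the third (maintain, during the walk, a pointer into the parse tree of $\REV{\CDAWG}_T$) unspecified, conceding that ``reconciling the two orders \dots\ is the delicate part.'' That is the gap. Note also that your third idea is not obviously well-defined: the matched prefix $P[1..i]$ generally has many occurrences in $T$, so there is no canonical ``current position'' in the parse tree of $T$ to maintain, and following an arc $(u',v')$ of $\CDAWG_T$ does not correspond to any single move in that parse tree.

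The paper's fix is simpler than any of your three candidates and needs no incremental bookkeeping during the search. By Property~\ref{obs:grammar}, for every node $v'$ of $\CDAWG_T$ one has $\ell(v') = \pi(F_1)\cdots\pi(F_k)\cdot W$, where each $F_p$ is the nonterminal associated with one in-neighbor of $v'$ and $W$ is the maximal repeat reached by the suffix pointer of $v'$. The paper stores, on every arc $(u',v')$, a static pointer to the nonterminal $F_p$ of $v'$ that corresponds to $u'$; this costs $O(\newe_T)$ words in total. When the blind search terminates at $v'$ having arrived via $(u',v')$, this pointer gives in $O(1)$ time the block $\pi(F_p)$ of $\ell(v')$ that contains the start position $i$ of the candidate match (one checks via Property~\ref{obs:inNeighbors} that $i$ indeed falls inside $\pi(F_p)$). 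Moreover, because the search reached $v'$ by right-extending past the end of $\ell(u')$'s occurrence in $\ell(v')$, the end position $j$ lies strictly beyond $\pi(F_p)$. Hence $V[i..j]$ decomposes as a suffix of $\pi(F_p)$, then the full strings $\pi(F_{p+1}),\dots,\pi(F_{q-1})$, then a prefix of $\pi(F_q)$---or, if $j$ falls inside $W$, a recursion into $w'$ via the suffix pointer---of total length exactly $m$, each piece extractable in time linear in its length by Lemma~\ref{lemma:scan}. No descent to a leaf of the CDAWG, no global position in $T$, and no parse-tree pointer for $T$ is needed.
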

\begin{proof}
We assume that every node $v'$ of $\CDAWG_T$ stores in a variable $v'.\mathtt{freq}$ the number of occurrences of $\ell(v')$ in $T$. Recall that, for a node $v'$ of $\CDAWG_T$, $\ell(v') = \pi(F_1) \pi(F_2) \cdots \pi(F_k) \cdot W$, where $F_p$ for $p \in [1..k]$ are nonterminals of the grammar, and $W$ is the maximal repeat that labels the node $w'$ of $\CDAWG_T$ that is reachable from $v'$ by a suffix pointer. For each arc $(u',v')$ of $\CDAWG_T$, we store a pointer to the nonterminal $F_p$ of $v'$ that corresponds to $u'$. We perform a blind search for $P$ in $\CDAWG_T$ as described in Section \ref{sec:locatingWithCDAWG}: either the search is unsuccessful, or it returns a node $v'$ of $\CDAWG_T$ and an integer interval $[i..j]$ such that, if $P$ occurs in $T$, then $P=V[i..j]$ where $V=\ell(v')$, and the number of occurrences of $P$ in $T$ is $v'.\mathtt{freq}$. To decide whether $P$ occurs in $T$, we reconstruct the characters in $V[i..j]$ as follows (Figure \ref{fig:arc}a). Clearly $i$ belongs to a $\pi(F_p)$ for some $p$, and such $F_p$ can be accessed in constant time using the pointers described at the beginning of the proof. If $i$ is the first position of $\pi(F_p)$, we extract all characters of $\pi(F_p)$ by performing a linear-time traversal of the parse tree of $F_p$. Otherwise, we extract the suffix of $\pi(F_p)$ in linear time using Lemma \ref{lemma:scan}. Note that $j$ must belong to $\pi(F_q)$ for some $q>p$, since the search reaches $v'$ after right-extending a suffix of an in-neighbor $u'$ of $v'$ that belongs to the equivalence class of $u'$ (recall Property \ref{obs:inNeighbors}). We thus proceed symmetrically, traversing the entire parse tree of $F_{p+1} \dots F_{q-1}$ and finally extracting either the entire $\pi(F_q)$ or a prefix. Finally, $j$ could belong to $W$, in which case we traverse the entire parse tree of $F_{p+1} \dots F_k$ and we recur on $w'$, resetting $j$ to $j-\sum_{x=1}^{k}|\pi(F_x)|$. If the verification is successful, we proceed to locate all the occurrences of $P$ in $T$ as described in Section \ref{sec:locatingWithCDAWG}. 
\end{proof}

Note that the data structure in Theorem \ref{thm:verification} takes actually $O(\min\{\newe_T,\newe_{\REV{T}}\})$ words of space, since one could index either $T$ or $\REV{T}$ for counting and locating. Lemma \ref{lemma:scan} can also be used to report the top $k$ occurrences of a pattern $P$ in $T$, according to the popularity of the right-extensions of $P$ in the corpus:

\begin{corollary} \label{cor:topK}
Let $P$ be a pattern, let $\mathcal{P}=\{p_1,p_2,\dots,p_m\}$ be the set of all its starting positions in a text $T$. Let sequence $Q = q_1,q_2,\dots,q_m$ be such that $q_i \in \mathcal{P}$ for all $i \in [1..m]$, $q_i \neq q_j$ for all $i \neq j$, and $i<j$ iff $T[q_i..|T|]$ is lexicographically smaller than $T[q_j..|T|]$. Let sequence $S = s_1,s_2,\dots,s_m$ be such that $s_i \in \mathcal{P}$ for all $i \in [1..m]$, $s_i \neq s_j$ for all $i \neq j$, and $i<j$ iff the frequency of $T[s_i..s_i+x]$ in $T$ is not smaller than the frequency of $T[s_j..s_j+x]$ in $T$ (with ties broken lexicographically), where $x$ is the length of the longest common prefix between $T[s_i..|T|]$ and $T[s_j..|T|]$. There is a data structure that allows one to return the first $k$ elements of sequence $Q$ or $S$ in constant time per element and in $O(\min\{k,h_T\})$ words of working space. Such data structure takes $O(\newe_T)$ words of space.
\end{corollary}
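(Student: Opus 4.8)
The plan is to reduce both parts of the corollary to traversals of (weighted) DAGs induced by the CDAWG, and then invoke Lemma~\ref{lemma:scan} together with Corollary~\ref{cor:generalizedScan}. First I would run the blind search for $P$ in $\CDAWG_T$ exactly as in Section~\ref{sec:locatingWithCDAWG} and Theorem~\ref{thm:verification}; either $P$ does not occur, or the search lands at a node $v'$ of $\CDAWG_T$ (with an offset interval $[i..j]$ identifying $P$ inside $\ell(v')$) such that the occurrences of $P$ in $T$ are in bijection with the source-to-sink paths of $\CDAWG_T$ starting at $v'$. Since $\REV{\CDAWG}_T$ has a bijection between its distinct paths and the suffixes of $T$, and the $i$-th leaf of $\mathcal{T}(\REV{\CDAWG}_T)$ in depth-first order is the $i$-th suffix of $T$ in string order, the subtree of $\mathcal{T}(\REV{\CDAWG}_T)$ below the node reached from the sink that corresponds to $v'$ enumerates exactly the occurrences of $P$, with its leaves in lexicographic order of the corresponding suffixes. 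Hence for sequence $Q$ I would apply Corollary~\ref{cor:generalizedScan} (or directly Lemma~\ref{lemma:scan}) to this subtree/sub-DAG, weighting each arc of $\REV{\CDAWG}_T$ so that the accumulated weight along a root-to-leaf path yields the text position $\gamma.\mathtt{pos}+j-1$ of the corresponding occurrence (this is just the left-extension bookkeeping of the blind search, so the needed $\gamma.\mathtt{left}$ values are already stored on the arcs). This reports the first $k$ elements of $Q$ in constant time per element and $O(\min\{k,h_T\})$ working space, all within $O(\newe_T)$ words.

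For sequence $S$ the key observation is that, among the occurrences of $P$, the longest common prefix of two suffixes $T[s_i..|T|]$ and $T[s_j..|T|]$ is determined by the lowest node of $\CDAWG_T$ that is a common ancestor of their leaves in $\mathcal{T}(\CDAWG_T)=\ST_T$, and the requested tie-breaking key (``frequency of $T[s_i..s_i+x]$ in $T$'') is exactly $v'.\mathtt{freq}$ of that common-ancestor node, which we store. Therefore ordering the occurrences of $P$ by non-increasing frequency of their maximal common extension, ties broken lexicographically, is the same as the order in which leaves of the subtree of $\ST_T$ rooted at $v'$ are discovered by a traversal that, at every node, visits children in non-increasing order of $\mathtt{freq}$ (ties broken by the lexicographic arc order already present in $\CDAWG_T$). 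Concretely I would precompute, at each node $w'$ of $\CDAWG_T$ and each in-neighbor-induced block of its equivalence class, the out-neighbors sorted by $(-\mathtt{freq}, \text{lex})$; these sorted adjacency lists take $O(\newe_T)$ words in total and $O(\newe_T \log \newe_T)$ construction time. Then running the preorder DAG traversal of Lemma~\ref{lemma:scan} on $\CDAWG_T$ starting from $v'$, but using these re-sorted adjacency lists, enumerates the occurrences in $S$-order; augmenting arcs with the same $\gamma.\mathtt{left}$ weights as before and summing them along the path yields the text position of each reported occurrence in constant additional time. The working space is again $O(\min\{k,h_T\})$, since the stack depth is bounded by the length of a longest path from $v'$ to the sink.

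The main obstacle I anticipate is the correctness argument for the $S$-order, specifically matching the paper's definition of $S$ (which is phrased pairwise, via the lcp-length $x$ and the frequency of $T[s_i..s_i+x]$) to a single global traversal order. One must check that ``sort children by frequency at every node'' induces a total order on leaves that agrees with the pairwise comparison for \emph{every} pair $s_i, s_j$ simultaneously — i.e., that the pairwise relation is a strict weak order with no cycles. This follows because the frequency key is monotone along root-to-leaf paths in $\ST_T$ (a descendant node has frequency at most that of its ancestor) so the comparison of two leaves is decided at their LCA by the relative order of the two distinct child subtrees containing them, and the child ordering is itself a fixed linear order; lexicographic tie-breaking resolves the remaining ambiguities consistently because it too is induced by the fixed left-to-right order of arcs in $\CDAWG_T$. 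The only subtlety is the case where $x=0$ (the occurrences diverge immediately at $v'$): here ``$T[s_i..s_i+x]$'' is a single character, its frequency is the number of leaves under the corresponding child of $v'$, and this is exactly the key used by the traversal at $v'$, so the boundary case is consistent. Once this equivalence is established, the time and space bounds are immediate from Lemma~\ref{lemma:scan} and Corollary~\ref{cor:generalizedScan}.
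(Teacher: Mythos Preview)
Your treatment of sequence $S$ is essentially the paper's argument: re-sort the out-neighbor lists of $\CDAWG_T$ by non-increasing frequency (ties lexicographically), build the spanning tree $\tau$ on these new first-children, and enumerate via Corollary~\ref{cor:generalizedScan}. Your discussion of why the pairwise comparison in the statement coincides with a single DFS order is also the right justification.

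The handling of sequence $Q$, however, has a genuine gap. You propose to traverse $\REV{\CDAWG}_T$, but as you yourself recall, the $i$-th leaf of $\mathcal{T}(\REV{\CDAWG}_T)$ in depth-first order is the $i$-th suffix of $T$ in \emph{string} (position) order, not in lexicographic order; so the conclusion ``with its leaves in lexicographic order of the corresponding suffixes'' does not follow. Moreover, the occurrences of $P$ do not sit as a single subtree of $\mathcal{T}(\REV{\CDAWG}_T)$: a node $v'$ of the CDAWG may be generated many times in $\mathcal{T}(\REV{\CDAWG}_T)$, and there is no single ``node reached from the sink that corresponds to $v'$'' whose subtree captures exactly the occurrences of $P$.

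The fix is to do for $Q$ exactly what you already do for $S$, but on the \emph{forward} $\CDAWG_T$ with its default lexicographic arc order. Since $\mathcal{T}(\CDAWG_T)=\ST_T$, the paths from $v'$ to the sink enumerated in preorder are precisely the occurrences of $P$ in lexicographic order of the containing suffixes. One builds $\tau$ on $\CDAWG_T$ by linking each node to its lexicographically smallest out-neighbor, stores in each node of $\tau$ the cumulative edge-length to the root, and applies Corollary~\ref{cor:generalizedScan} from $v'$ to output the positions. This is the paper's proof; your weighting idea (accumulating $\gamma.\mathtt{left}$ and reading $\gamma.\mathtt{pos}$ at the sink) is the same mechanism, just placed on the correct DAG.
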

\begin{proof}
Recall that Theorem \ref{thm:verification} builds the spanning tree $\tau$ of Lemma \ref{lemma:scan} on the reversed CDAWG that represents a straight-line program of $T$. To print $Q$, we build $\tau$ and the corresponding level-ancestor data structure on $\CDAWG_T$, connecting each vertex of the CDAWG to its lexicographically smallest out-neighbor, and storing in each node of $\tau$ the sum of lengths of all edges from the node to the root of $\tau$. Given the locus $v'$ of $P$ in $\CDAWG_T$, we can print the first $k$ elements of $Q$ in $O(k)$ time and in $O(\min\{k,h\})$ words of space, where $h$ is the length of a longest path from $v'$ to the sink of $\CDAWG_T$, by using Corollary~\ref{cor:generalizedScan}. To print $S$ we add to each node of $\CDAWG_T$ an additional list of children, sorted by nondecreasing frequency with ties broken lexicographically, and we build the spanning tree $\tau$ by connecting each vertex of $\CDAWG_T$ to its first out-neighbor in such new list. 
\end{proof}

Finally, Theorem \ref{thm:verification} allows one to reconstruct the label of any arc of the CDAWG, in linear time in the length $k$ of such label. This improves the $O(k\log{\log{n}})$ bound described in \cite{belazzougui2015composite}, where $n$ is the length of the uncompressed text, and it removes the $\newe_{\REV{T}}$ term from the space complexity, since $\RLBWT_{\REV{T}}$ is not needed.

\begin{figure}[t!]
\centering
\includegraphics[width=1.0\textwidth]{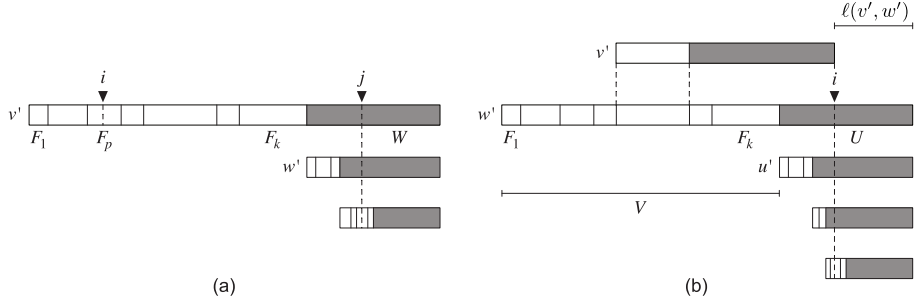}
\caption{(a) The verification step of pattern search, implemented with the CDAWG. Notation follows Theorem \ref{thm:verification}. (b) Reconstructing the label of an arc of the CDAWG. Notation follows Theorem \ref{thm:arcLabel}.}
\label{fig:arc}
\end{figure}

\begin{theorem} \label{thm:arcLabel}
There is a data structure that allows one to read the $k$ characters of the label of an arc $(v',w')$ of $\CDAWG_T$, in $O(k)$ time and in $O(\min\{k,h_T\})$ words of working space. Such data structure takes $O(\newe_T)$ words of space.
\end{theorem}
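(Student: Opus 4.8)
The plan is to reduce the problem of reading the label of an arc $(v',w')$ of $\CDAWG_T$ to the machinery already built in Theorem \ref{thm:verification}, using Property \ref{obs:inNeighbors} to describe the label in terms of nonterminals of the grammar induced by $\REV{\CDAWG}_T$. Recall that an arc $(v',w')$ of $\CDAWG_T$ encodes the fact that the maximal repeat $W'=\ell(w')$ extends the maximal repeat $V'=\ell(v')$ both to the right and to the left: its label is precisely the right extension, a substring of $T$ of some length $k=\gamma.\mathtt{right}$. The key observation is that, by Property \ref{obs:inNeighbors}, the equivalence class $\mathcal{S}_{w'}$ of $w'$ is partitioned among its in-neighbors, and the block $\mathcal{S}_{w'}^i$ contributed by the in-neighbor $v'$ consists of the right-maximal strings $W'[x^i+1..m],\dots,W'[x^i+|\mathcal{S}_{v'}|..m]$; moreover $V'[p..|V'|]$ labels the parent in $\ST_T$ of the locus of $W'[x^i+p..m]$. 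Hence the label of arc $(v',w')$ is exactly the substring of $W'$ that begins immediately after the occurrence of $V'$ contributed by this in-neighbor and extends for $\gamma.\mathtt{right}$ characters, i.e. it is a window $W'[a..a+k-1]$ of the maximal repeat $W'$ for a suitable starting offset $a$ determined by $x^i$ and the arc data stored in Section \ref{sec:locatingWithCDAWG}.

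First I would precompute, for each in-neighbor $v'$ of $w'$, the offset $x^i$ at which its block of right-maximal strings begins inside $W'$; this is just the cumulative sum of $|\mathcal{S}_{v^j}|$ over the in-neighbors smaller than $v'$ in the order defining $\REV{\CDAWG}_T$, and each $|\mathcal{S}_{v^j}|$ equals the difference $v^j.\mathtt{length}$ minus the length of the maximal repeat reached from $v^j$ by its suffix pointer, so it can be stored in $O(\newe_T)$ words (or recomputed on the fly). Given $x^i$ and the stored $\gamma.\mathtt{right}$ for the arc $(v',w')$, the window $W'[a..a+k-1]$ to be extracted is fixed. Then, exactly as in Theorem \ref{thm:verification}, $\ell(w')=\pi(F_1)\cdots\pi(F_t)\cdot U$, where $F_1,\dots,F_t$ are the nonterminals of the in-neighbors of $w'$ and $U$ is the maximal repeat reached from $w'$ by its suffix pointer (Property \ref{obs:grammar}); I would locate the nonterminal $F_p$ containing position $a$ using the per-arc pointers already available, extract the relevant suffix of $\pi(F_p)$ by Lemma \ref{lemma:scan}, traverse entirely the parse trees of the intervening nonterminals $F_{p+1},\dots,F_{q-1}$ in time linear in their output, extract a prefix of $\pi(F_q)$ again by Lemma \ref{lemma:scan}, and, if the window spills into $U$, recur on $w'$ with the offset shifted by $\sum_{x=1}^{t}|\pi(F_x)|$ — this is literally the walk depicted in Figure \ref{fig:arc}b. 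The total time is $O(k)$ and the working space is $O(\min\{k,h_T\})$ by Lemma \ref{lemma:scan}, and the space of the data structure stays $O(\newe_T)$ because the spanning tree $\tau$, the level-ancestor structure, the suffix pointers, and the per-arc pointers into nonterminals are all of that size.

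The main obstacle will be pinning down the starting offset $a$ correctly and showing that the window $W'[a..a+k-1]$ really is the arc label. One has to be careful that the arc $(v',w')$ of the CDAWG corresponds, in $\ST_T$, to descending from the locus of some right-maximal string in the equivalence class of $v'$ to the locus of some right-maximal string in the equivalence class of $w'$: by Property \ref{obs:inNeighbors} the right-maximal string $V'[p..|V'|]$ is the parent of the locus of $W'[x^i+p..m]$, so taking the longest member of $\mathcal{S}_{v'}$ (which is the member that sees all of $\ell(v')$, i.e. $p=1$) the child is the locus of $W'[x^i+1..m]$, whose first $\gamma.\mathtt{right}$ characters beyond $V'$ are exactly the edge label — hence $a = x^i - |V'| + 1 + |V'| = x^i + 1$ within $W'$, or equivalently the label is $W'[x^i+1..x^i+\gamma.\mathtt{right}]$ after accounting for the fact that $V'$ itself is a suffix of $W'[1..x^i+|V'|]$; I would verify this indexing against the identity $\gamma.\mathtt{left}=w'.\mathtt{length}-v'.\mathtt{length}-\gamma.\mathtt{right}$ from Section \ref{sec:locatingWithCDAWG} to make sure the offsets are consistent. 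Once the window is identified, everything else is a direct reuse of Theorem \ref{thm:verification}, so the proof is short. I would also remark, as the paper does for Theorem \ref{thm:verification}, that no $\RLBWT$ of $\REV{T}$ is needed, which is what removes the $\newe_{\REV{T}}$ term from the earlier bound.
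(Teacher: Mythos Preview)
Your high-level reduction is right: the label of $(v',w')$ is a suffix $W[i..|W|]$ of $W=\ell(w')$, and one wants to start extracting at position $i$ inside the grammar decomposition $\ell(w')=\pi(F_1)\cdots\pi(F_t)\cdot U$ and then recur via the suffix pointer into $U$. This is exactly what the paper does. However, two things in your write-up need fixing.

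First, your computation of the starting offset is off. The longest member of $\mathcal{S}^i_{w'}$ is $W[x^i+1..m]$, and $\ell(v')$ is its \emph{prefix}, so the edge label begins at position $x^i+|\ell(v')|+1$ in $W$, not at $x^i+1$. (You can sanity-check this against $\gamma.\mathtt{left}=w'.\mathtt{length}-v'.\mathtt{length}-\gamma.\mathtt{right}$: indeed $\gamma.\mathtt{left}=x^i$, so $i=x^i+|\ell(v')|+1$.) In particular the per-arc pointer from Theorem~\ref{thm:verification}, which points to the nonterminal of $w'$ corresponding to $v'$ (occupying positions $x^i+1,\dots,x^i+|\mathcal{S}_{v'}|$), does \emph{not} land on the nonterminal containing $i$, since $|\mathcal{S}_{v'}|\le|\ell(v')|$.

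Second, and this is the real gap, your recursion ``if the window spills into $U$, recur with the offset shifted'' is not $O(k)$. It can happen that $i>\sum_x|\pi(F_x)|$, so you recur into $U=\ell(u')=\pi(F'_1)\cdots\pi(F'_{t'})\cdot U'$; and there it can again happen that the shifted offset exceeds $\sum_x|\pi(F'_x)|$, and so on. The number of suffix pointers traversed before any character is emitted is the number of maximal repeats that are suffixes of $W$ and have length at least $k$; this is not bounded by $k$ (nor by $h_T$). The paper handles exactly this issue by precomputing, for every arc $(v',w')$, a direct pointer to the first maximal repeat $t'$ along the suffix-pointer chain from $w'$ satisfying $|\ell(t')|\ge k$ and whose own suffix-pointer target has length $<k$; one then jumps to $t'$ in constant time and extracts from there. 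You need either this precomputed shortcut or an equivalent constant-time mechanism to reach the correct starting nonterminal; ``recur and shift'' alone does not give the $O(k)$ bound.
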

\begin{proof}
Recall that every arc $(v',w')$ that does not point to the sink of $\CDAWG_T$ is a right-maximal substring of $T$. If it is also a maximal repeat, then we can already reconstruct it as described in Theorem \ref{thm:verification}, storing a pointer to such maximal repeat, starting extraction from the first nonterminal of the maximal repeat, and recurring to the maximal repeat reachable from its suffix pointer. Otherwise, let $W=\ell(w')=VU$, where $U$ is the maximal repeat that corresponds to the node $u'$ reachable from the suffix pointer of $w'$, and let $V = \pi(F_1) \cdots \pi(F_k)$ where $F_p$ for $p \in [1..k]$ are nonterminals in the grammar. The label of $(v',w')$ coincides with suffix $W[i..|W|]$, and its length is stored in the index.

If $i \leq |V|$, let $V[i..|V|] = X \cdot \pi(F_{p+1}) \cdots \pi(F_k)$ for some $p$. To reconstruct $U$, we traverse the whole parse tree of $F_k,F_{k-1},\dots,F_{p+1}$, and we reconstruct the suffix of length $|X|$ of $\pi(F_p)$ using Lemma \ref{lemma:scan}. Otherwise, if $i>|V|$, we could recur to $U$, resetting $i$ to $i-|V|$ (Figure \ref{fig:arc}b). Let $U=V'U'$, where $U'$ is the maximal repeat that corresponds to the node reachable from the suffix pointer of $u'$. Note that it could still happen that $i>|V'|$, thus we might need to follow a sequence of suffix pointers. During the construction of the index, we store with arc $(v',w')$ a pointer to the first maximal repeat $t'$, in the sequence of suffix pointers from $w'$, such that $|\ell(t')| \geq |\ell(v',w')|$, and such that the length of the longest proper suffix of $\ell(t')$ that is a maximal repeat is either zero or smaller than $|\ell(v',w')|$. To reconstruct $\ell(v',w')$, we just follow such pointer and proceed as described above.

Reading the label of an arc that is directed to the sink of $\CDAWG_T$ can be implemented in a similar way: we leave the details to the reader. 
\end{proof}

We can also read the label of an arc $(v',w')$ \emph{from right to left}, with the stronger guarantee of taking constant time per character:

\begin{corollary} \label{cor:rightToLeft}
There is a data structure that allows one to read the $k$ characters of the label of an arc $(v',w')$ of $\CDAWG_T$, from right to left, in constant time per character and in $O(\min\{k,h_T\})$ words of working space. Such data structure takes $O(\newe_T)$ words of space.
\end{corollary}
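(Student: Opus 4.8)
The plan is to reduce Corollary~\ref{cor:rightToLeft} to Theorem~\ref{thm:arcLabel} together with a right-to-left variant of the label-extraction machinery, rather than reprove everything from scratch. The key observation is that reading $\ell(v',w')$ from right to left in $T$ is the same as reading it from left to right in $\REV{T}$, and the grammar structure we already have is perfectly symmetric: Property~\ref{obs:grammar} gives a straight-line program for $T$ whose DAG is $\REV{\CDAWG}_T$, and the analogous construction applied to $\REV{T}$ gives a straight-line program whose DAG is $\REV{\CDAWG}_{\REV{T}}$, equivalently the CDAWG of $T$ viewed with its arcs suitably relabelled. So the first step I would take is to make precise the correspondence between arcs of $\CDAWG_T$ and arcs of $\CDAWG_{\REV{T}}$ (or between nonterminals of the two induced grammars), and observe that reconstructing the reverse of $\ell(v',w')$ is exactly the operation of Theorem~\ref{thm:arcLabel} carried out on this mirrored structure.

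The second step is to handle the ``constant time per character'' strengthening. Theorem~\ref{thm:arcLabel} only promises amortized $O(k)$ time, because of the bursts of stack pops in Algorithm~\ref{algo:scan}; but Corollary~\ref{cor:scan} already shows how to deamortize a single call to Algorithm~\ref{algo:scan} into constant time per character, by preemptively popping exhausted tuples before pushing new ones. The reconstruction in Theorem~\ref{thm:arcLabel} is a concatenation of one suffix-extraction (via Lemma~\ref{lemma:scan}), several whole-parse-tree traversals, and possibly a recursion through suffix pointers; each of these pieces is either a left-to-right parse-tree walk or a Lemma~\ref{lemma:scan} call, and the only sources of non-constant delay are the stack-pop bursts inside those calls and the hops along the chain of suffix pointers. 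I would argue that the suffix-pointer hops are eliminated by the precomputed pointer to $t'$ stored in Theorem~\ref{thm:arcLabel} (so at most one hop, at the very start, before any character is emitted), and that within each Lemma~\ref{lemma:scan}-style subroutine the Corollary~\ref{cor:scan} trick applies verbatim. Concatenating finitely many constant-delay streams, with a bounded amount of bookkeeping at each junction, still yields constant delay overall.

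The main obstacle I expect is the direction mismatch at the seams. Theorem~\ref{thm:arcLabel} decomposes $\ell(w') = VU = \pi(F_1)\cdots\pi(F_k)\,U$ and reconstructs $V[i..|V|]$ left to right as $X\cdot\pi(F_{p+1})\cdots\pi(F_k)$ followed by a recursion into $U$ — that is, it naturally produces the label \emph{prefix-first}. To get right-to-left output I must instead emit $U$'s contribution first (which means recursing into the suffix pointer before touching $F_{p+1},\dots,F_k$), then emit $\pi(F_k),\pi(F_{k-1}),\dots,\pi(F_{p+1})$ in reverse order, and finally the reversed suffix $X$ of $\pi(F_p)$. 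Reversing the order of the \emph{blocks} is easy since $k$ is bounded by the out-degree and we know it up front; the genuinely delicate point is reversing \emph{within} a block, i.e.\ streaming $\pi(F_j)$ from right to left with constant delay. Here I would either invoke the mirrored grammar observation from the first paragraph (so that ``right-to-left in $\pi(F_j)$'' becomes ``left-to-right in the corresponding nonterminal of the $\REV{T}$ grammar'', to which Corollary~\ref{cor:scan} applies directly), or — if one wants to avoid maintaining the $\REV{T}$ structure — run Algorithm~\ref{algo:scan} with the out-neighbor lists of $G$ traversed in reverse and with the ``spanning tree'' taken through \emph{last} children instead of first children, checking that the level-ancestor argument and the amortization survive this reflection unchanged. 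I would finish by noting that the working space bound $O(\min\{k,h_T\})$ is unaffected, since the stack depth is still bounded by the length of the relevant source-to-sink path, and that only $O(\newe_T)$ words are needed because, as in Theorem~\ref{thm:verification} and Theorem~\ref{thm:arcLabel}, the reversed/mirrored structure has the same asymptotic size and the extra suffix-pointer pointers charge to suffix links.
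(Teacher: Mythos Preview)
Your overall strategy is close to the paper's, and your Option~2 (running Algorithm~\ref{algo:scan} with out-neighbor lists reversed and the spanning tree taken through last children) is exactly the right-to-left analogue of Corollary~\ref{cor:scan} that the paper implicitly uses to stream each block $\pi(F_j)$ from right to left in constant time per character. That part is fine.

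The genuine gap is your handling of the suffix-pointer chain. You write that ``the suffix-pointer hops are eliminated by the precomputed pointer to $t'$ stored in Theorem~\ref{thm:arcLabel} (so at most one hop, at the very start, before any character is emitted)''. This is not correct for right-to-left extraction. The pointer to $t'$ was precomputed so that the \emph{left} endpoint of $\ell(v',w')$ falls inside the $V$-part of $\ell(t')$; it eliminates the chain of hops one would otherwise take to locate where extraction \emph{begins} when reading left to right. But when reading right to left you must emit the last character of $\ell(t')$ first, and $\ell(t') = \pi(F_{t'}) \cdot \ell(u_1)$, $\ell(u_1) = \pi(F_{u_1}) \cdot \ell(u_2)$, and so on down the entire suffix-pointer chain $t',u_1,u_2,\ldots,u_p$ until $u_p$ has empty suffix pointer. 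Only then can the first character be produced. The chain length $p$ is the depth of $t'$ in the compact suffix-link tree and can be $\omega(1)$, so the delay before the first character is not constant.

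The paper's fix is precisely to make this chain randomly accessible: it stores the tree of explicit Weiner links (equivalently, the reverse of the compact suffix-link tree) and a level-ancestor structure on it, so that $\mathtt{levelAncestor}(u',1)$ jumps in $O(1)$ time straight to the bottom of the chain, and $\mathtt{levelAncestor}(u',2),\mathtt{levelAncestor}(u',3),\ldots$ walk back up one block at a time, each block then streamed right to left via (the reversed) Corollary~\ref{cor:scan}. This is morally the same as applying your reversed-spanning-tree idea to the \emph{augmented} grammar $W' \to F\,U'$ mentioned just after the corollary, rather than only to the inner grammar on the $F_j$'s.

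A smaller issue: your Option~1, building the grammar of $\REV{T}$, would cost $O(\newe_{\REV{T}})$ words, not $O(\newe_T)$; the two quantities are not in general within a constant factor of each other, so this route would not meet the stated space bound.
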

\begin{proof}
We proceed as in Theorem \ref{thm:arcLabel}, but we also keep the tree $\tau$ of explicit Weiner links from every node of $\CDAWG_T$, imposing an arbitrary order on the children of every node $t$ of $\tau$, and we build a data structure that supports level ancestor queries on $\tau$. 
As in Theorem \ref{thm:arcLabel}, we move to a maximal repeat $u'$ such that $|\ell(u')| \geq |\ell(v',w')|$, and such that the length of the longest proper suffix of $\ell(u')$ that is a maximal repeat is either zero or smaller than $|\ell(v',w')|$. Then, we move to node $x'=\mathtt{levelAncestor}(u',1)$, we reconstruct $\ell(x')$ from right to left using Corollary \ref{cor:scan}, and we use $\mathtt{levelAncestor}(u',2)$ to follow an explicit Weiner link from $x'$. After a sequence of such explicit Weiner links we are back to $u'$, and we reconstruct from right to left the prefix of $\ell(v',w')$ that does not belong to the longest suffix of $\ell(u')$ that is a maximal repeat, using again Corollary \ref{cor:scan}. 
\end{proof}

Since the label of arc $(v',w')$ is a suffix of $\ell(w')$, and since the label of every node $w'$ of the CDAWG can be represented as $\pi(F) \cdot \ell(u')$, where $F$ is a nonterminal of the grammar and $u'$ is the longest suffix of $\ell(w')$ that is a maximal repeat, we could implement Corollary \ref{cor:rightToLeft} by adding to the grammar the nonterminals $W'$ and $U'$ and a new production $W' \rightarrow F U'$ for nodes $w'$ and $u'$, and by using Corollary \ref{cor:scan} for extraction. This does not increase the size of the grammar asymptotically. Note that the subgraph induced by the new nonterminals in the modified grammar is the reverse of the compact suffix-link tree of $T$.

\section{Faster Matching Statistics in the CDAWG} \label{sec:fasterMS}

A number of applications, including matching statistics, require reading the label of an arc \emph{from left to right}: this is not straightforward using the techniques we described, since the label of an arc $(v',w')$ can start e.g. in the middle of one of the nonterminals of $w'$ rather than at the beginning of one such nonterminal (see Figure \ref{fig:arc}b). We circumvent the need for reading the characters of the label of an arc from left to right in matching statistics, by applying the algorithm in Theorem \ref{thm:verification} to prefixes of the pattern of exponentially increasing length:

\begin{lemma} \label{lemma:leftToRight}
There is a data structure that, given a string $S$ and an arc $(v',w')$ of $\CDAWG_T$, allows one to compute the length $k$ of the longest prefix of $S$ that matches a prefix of the label of $(v',w')$, in $O(k)$ time and in $O(\min\{k,h_T\})$ words of working space. Such data structure takes $O(\newe_T)$ words of space.
\end{lemma}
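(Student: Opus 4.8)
The plan is to run an exponential-doubling search in which, at step $t$, we reconstruct the first $\ell_t=\min\{2^t,|S|,L\}$ characters of the arc label $M=\ell(v',w')$ (where $L=|M|$) and compare them, on the fly, against the corresponding characters of $S$. As soon as a mismatch is found at some position $j\le\ell_t$ we return $k=j-1$; if all $\ell_t$ characters match and $\ell_t=\min\{|S|,L\}$ we return $k=\ell_t$; otherwise no mismatch occurred among the first $\ell_t=2^t$ characters, so $k\ge 2^t$, and we discard everything and move to step $t+1$. Every non-final step therefore satisfies $2^t\le k$, so the last step reconstructs at most $2k$ characters; provided each step runs in time linear in the number of characters it reconstructs, the running time is a geometric sum equal to $\Oh{k}$. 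Because the steps are run one after another and nothing is retained between them (characters are matched against the input string $S$ immediately rather than stored), the peak working space equals that of the last step.

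It therefore suffices to reconstruct, in $\Oh{\ell}$ time and $\Oh{\min\{\ell,h_T\}}$ words of working space, the first $\ell$ characters of the label of an arc $(v',w')$, for any $\ell\le L$. This is a prefix of what Theorem~\ref{thm:arcLabel} already reconstructs, and I would obtain it by the same route: follow the pointer stored with the arc to the maximal repeat $t'$ in the suffix-pointer chain of $w'$ with $|\ell(t')|\ge L$ and with longest proper maximal-repeat suffix $U'$ shorter than $L$; write $\ell(t')=V'U'$ with $V'=\pi(F_1)\cdots\pi(F_j)$ (Property~\ref{obs:grammar}); then $M$ is the suffix of $\ell(t')$ of length $L$, so it begins at some position $a\le|V'|$ inside a nonterminal $F_p$ (reached via the pointers stored with the arc), and $M[1..\ell]$ is the length-$\ell$ prefix of $\ell(t')[a..|\ell(t')|]$. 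We emit it left to right: a prefix of the relevant suffix of $\pi(F_p)$ (extracted with the technique of Lemma~\ref{lemma:scan} and of the verification step of Theorem~\ref{thm:verification}); then, if $\ell$ is not yet exhausted, the full parse trees of $F_{p+1},\dots,F_j$; and then, if $\ell$ reaches into $U'$, a prefix of the maximal repeat $U'$ obtained recursively as in the $i=1$ case of Theorem~\ref{thm:verification} (which may itself recurse through further suffix pointers); in every case extraction halts the instant $\ell$ characters have been emitted. Arcs directed to the sink of $\CDAWG_T$ are handled analogously, reconstructing the needed prefix of the appropriate suffix of $T$ from the top-level nonterminal of the grammar induced by $\REV{\CDAWG}_T$. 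All structures used are exactly those of Theorem~\ref{thm:arcLabel}, so the space is $\Oh{\newe_T}$ words, and the peak working space across the doubling is $\Oh{\min\{2k,h_T\}}$, i.e.\ $\Oh{\min\{k,h_T\}}$.

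I expect the main obstacle to be precisely the $\Oh{\ell}$ guarantee for reconstructing a length-$\ell$ \emph{prefix} of an arc label, as opposed to the whole label: one cannot simply invoke Theorem~\ref{thm:arcLabel} and stop early without argument, because the arc label can be a short suffix of a long maximal repeat, so reconstruction must start from inside $\ell(t')$ rather than from its beginning, and the starting position $a$ can fall in the interior of a nonterminal's parse tree, which one cannot navigate to naively in $\Oh{\ell}$ time. The pointers from Theorems~\ref{thm:verification} and~\ref{thm:arcLabel} are what jump directly to the nonterminal $F_p$ and let extraction restart there; the delicate part is the amortization, namely checking that the interleaving of (i) extraction inside $\pi(F_p)$, (ii) full parse-tree traversals of the later nonterminals, and (iii) recursion through the suffix-pointer chain, each truncated as soon as $\ell$ characters have been produced, still touches only $\Oh{\ell}$ characters and uses a stack of depth $\Oh{\min\{\ell,h_T\}}$ — essentially re-deriving the accounting of Theorem~\ref{thm:verification} while allowing extraction to stop after an arbitrary number of characters.
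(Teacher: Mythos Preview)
Your doubling search is the right shell and matches the paper's. The gap is the inner step: reconstructing the first $\ell$ characters of an arc label in $O(\ell)$ time using only the machinery of Theorem~\ref{thm:arcLabel}. The paper says explicitly, just before this lemma, that reading an arc label from left to right ``is not straightforward using the techniques we described, since the label of an arc $(v',w')$ can start e.g.\ in the middle of one of the nonterminals of $w'$.'' You correctly name the obstacle but do not actually overcome it. The pointers of Theorems~\ref{thm:verification} and~\ref{thm:arcLabel} land you at the maximal repeat $t'$ and at the nonterminal $F_p$ containing the starting position, but that start can lie at an arbitrary interior offset $b$ of $\pi(F_p)$, and Lemma~\ref{lemma:scan} gives no way to begin left-to-right emission at an interior position in $O(1)$ time; reaching position $b$ by any of the available scans costs at least $\min\{b,\,|\pi(F_p)|-b+1\}$, which can dwarf~$\ell$. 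The reason the analogous step succeeds inside Theorem~\ref{thm:verification} is that there a blind search for $P$ has just been performed, and that search path (of length at most $m$) is precisely what locates the starting offset inside the nested nonterminals. For a bare arc label you have no such path, and precomputing one per arc would cost $O(\newe_T\cdot h_T)$ space.

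The paper's fix is to manufacture such a path at every doubling step. It stores $\INTERVAL{\gamma}$ for each arc and, at step $i$, runs a fresh blind search \emph{from the root} for the known string $S[1..2^i]$, tracking BWT intervals via Property~\ref{obs:bwtIntervals} and halting at the deepest suffix-tree node $v$ whose interval still contains $\INTERVAL{\gamma}$ (so that $\ell(v)$ is a genuine prefix of $\ell(\gamma)$). It then extracts $\ell(v)$ and compares it with $S[1..2^i]$ exactly as in Theorem~\ref{thm:verification}; this is now legitimate because the just-completed blind search supplies the root-to-$v$ path that the extraction needs, and that path has length $O(2^i)$. In short, the missing idea in your proposal is ``search for the query, not for the label'': doing a blind search for $S[1..2^i]$ replaces an infeasible random access into the middle of $\ell(\gamma)$ with the verification of a node label reached along a freshly built, short path.
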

\begin{proof}
Let $\gamma=(v',w')$. If $\ell(\gamma)$ is a maximal repeat of $T$, we can already read its characters from left to right by applying Theorem \ref{thm:verification}. Otherwise, we perform a doubling search over the prefixes of $S$, testing iteratively whether $S[1..2^i]$ matches a prefix of $\ell(\gamma)$ for increasing integers $i$, and stopping when $S[1..2^i]$ does not match a prefix of $\ell(\gamma)$. We perform a linear amount of work in the length of each prefix, thus a linear amount of total work in the length of the longest prefix of $S$ that matches a prefix of $\ell(\gamma)$.

We determine whether $S[1..2^i]$ is a prefix of $\ell(\gamma)$ as follows. Recall that an arc of $\CDAWG_T$ (or equivalently of $\ST_T$) is a right-maximal substring of $T$, therefore it is also a node of $\ST_T$. We store for each arc $\gamma$ of $\CDAWG_T$ the interval $\INTERVAL{\gamma}$ of the corresponding string in $\BWT_T$. Given $S[1..2^i]$, we perform a blind search on the CDAWG, simulating a blind search on $\ST_T$ and using Property \ref{obs:bwtIntervals} to keep the BWT intervals of the corresponding nodes of $\ST_T$ that we meet. We stop at the node $v$ of the suffix tree at which the blind search fails, or at the first node whose interval does not contain $\INTERVAL{\gamma}$ (in which case we reset $v$ to its parent), or at the last node reached by a successful blind search in which the BWT intervals of all traversed nodes contain $\INTERVAL{\gamma}$. In the first two cases, we know that the longest prefix of $S$ that matches $\ell(\gamma)$ has length smaller than $2^i$. Then, we read (but don't explicitly store) the label of $v$ in linear time as described in Theorem \ref{thm:verification}, finding the position of the leftmost mismatch with $S[1..2^i]$, if any. 
\end{proof}

Lemma \ref{lemma:leftToRight} is all we need to implement matching statistics with the CDAWG:

\begin{theorem} \label{thm:ms}
There is a data structure that takes $O(\newe_T)$ words of space, and that allows one to compute $\MS_{S,T}$ in $O(|S|)$ time and in $O(\min\{\mu,h_T\})$ words of working space, where $\mu$ is the largest number in $\MS_{S,T}$.
\end{theorem}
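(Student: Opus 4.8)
The plan is to run the classical suffix-tree matching-statistics scan on the CDAWG-based representation of $\ST_T$ from \cite{belazzougui2015composite,belazzougui2017representing}, replacing its only non-constant-time-per-step primitive --- reading the characters of a suffix-tree edge label while comparing them left-to-right against $S$ --- with the machinery of Lemma \ref{lemma:leftToRight}. Concretely, I would scan $S[1..|S|]$ from left to right while maintaining the locus in $\ST_T$ of the longest prefix of $S[i..|S|]$ that occurs in $T$, where $i$ is the leftmost position not yet finalized. Recall that every suffix-tree edge is a CDAWG arc, and that the CDAWG representation supports child-by-character selection (via hashing), string depth, parent, and suffix links of explicit nodes in $O(1)$ time and $O(\newe_T)$ words, together with the per-arc BWT intervals $\INTERVAL{\gamma}$ used internally by Lemma \ref{lemma:leftToRight}; crucially no RLBWT is stored, which is what keeps the space in $O(\newe_T)$ and removes the $\newe_{\REV{T}}$ term of \cite{belazzougui2015composite}. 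Whenever the descending walk is at the source node of an arc $\gamma$ and must extend the current match along $\ell(\gamma)$, I invoke Lemma \ref{lemma:leftToRight} on the appropriate suffix of $S$ and on $\gamma$; in $O(k)$ time it returns the length $k$ of the matched prefix of $\ell(\gamma)$, and from it the algorithm either advances to $w'$ (if $k=|\ell(\gamma)|$) or has found the mismatch finalizing $\MS_{S,T}[i]$. When the match is exhausted we record $\MS_{S,T}[i]$, follow the suffix link of the last explicit node, set $i \gets i+1$, re-descend with the standard skip/count technique --- which only ever compares the \emph{first} character of each traversed arc, hence costs $O(1)$ per arc via $\gamma.\mathtt{char}$ and $\gamma.\mathtt{right}$ --- and resume extending.

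One subtlety is that after a suffix-link step the re-descent lands, in general, at a nonzero offset $o$ inside an arc, whereas Lemma \ref{lemma:leftToRight} reads an arc label only from its beginning, so resuming there re-reads the already-matched prefix $\ell(\gamma)[1..o]$. I would avoid paying for this whenever the suffix link stays inside one CDAWG equivalence class: in that case the arc on which we sit is literally the same CDAWG arc (the out-arcs of a node are shared by all right-maximal strings of its class), the offset is unchanged, and the extraction state maintained by Algorithm \ref{algo:scan} (or Corollary \ref{cor:scan}) carries over verbatim, so the whole re-descent is $O(1)$; this already disposes of pathological inputs such as $T=PQP$. Only suffix links that cross an equivalence class --- i.e. that leave the bottom of a unary path of the suffix-link tree of $T$ --- force rebuilding the extraction state at an offset into a new arc, at cost $O(o)$.

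For the running time I would then combine the classical amortized analysis with a bound on these cross-class re-reads. The matched length increases by at most one per successful character comparison and decreases by exactly one per suffix-link step, and $i$ plus the matched length is non-decreasing and bounded by $|S|+1$, so there are $O(|S|)$ successful comparisons and $O(|S|)$ suffix-link steps; the number of arcs traversed over all skip/count re-descents is $O(|S|)$ by the usual node-depth potential; and the total work inside calls to Lemma \ref{lemma:leftToRight}, each linear in the number of characters it reads, is $O(|S|)$ plus the total re-read across cross-class suffix links. The crux --- and the step I expect to be the main obstacle --- is to show this last quantity is also $O(|S|)$, which I would do by charging each cross-class re-read of $o$ characters to the skip/count descent that produced the offset $o$, using that such a descent re-traversed at least $o$ already-matched characters and that the amount of already-matched material re-traversed over the whole run telescopes to $O(|S|)$.

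Finally, for working space: between invocations the algorithm carries only the current locus, $O(1)$ counters, and the extraction stack of Algorithm \ref{algo:scan} for the current arc, whose depth is $O(\min\{k,h_T\})$ for an arc of which $k$ characters have been read; inside an invocation Lemma \ref{lemma:leftToRight} uses $O(\min\{k,h_T\})$ words for a call that reads $k$ characters, and every such $k$ is at most $\mu=\max_i \MS_{S,T}[i]$ since the $k$ read characters of $S$ occur in $T$. Hence the working space is $O(\min\{\mu,h_T\})$, and the total space is that of the CDAWG augmented exactly as in Theorem \ref{thm:verification} and Lemma \ref{lemma:leftToRight}, i.e. $O(\newe_T)$ words.
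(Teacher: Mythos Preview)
Your overall strategy is the paper's: run the classical suffix-tree matching-statistics scan on the CDAWG and call Lemma~\ref{lemma:leftToRight} whenever you need to extend along an arc. You also correctly single out the one real difficulty, namely that after a suffix link plus skip/count the locus may sit at a nonzero offset $o$ inside an arc, whereas Lemma~\ref{lemma:leftToRight} only reads an arc label from position~$1$.

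Where you diverge from the paper is in how you dispose of this difficulty. The paper never re-reads at all. Its key observation (implicit in Property~\ref{obs:inNeighbors} and spelled out in the proof) is that if, after taking one suffix link and re-descending, the locus is still strictly inside an arc, then the current string $\ell(w')X$ is not right-maximal; since it is a suffix of the previous non-right-maximal string $U$, its unique right extension is the same character that already failed, so $\MS_{S,T}[i+1]=\MS_{S,T}[i]-1$ with no character comparison needed. The same reasoning covers your ``same-class'' case. Consequently Lemma~\ref{lemma:leftToRight} is invoked only when the re-descent lands \emph{exactly} on a node $z'$, and always on a fresh arc from offset~$0$; the total work inside all such calls is then just the total amount of new extension, which telescopes to $O(|S|)$ immediately.

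Your alternative is to pay $O(o)$ to rebuild the extraction state after each cross-class link and then argue that $\sum o = O(|S|)$ via a charging to ``already-matched material re-traversed''. This is the part you yourself flag as the main obstacle, and as written it is not a proof: skip/count is amortized by \emph{node} depth, not string depth, so ``the descent re-traversed at least $o$ characters'' does not translate into $O(|S|)$ total without further argument. A bound along these lines can probably be pushed through, but it is exactly the complication the paper's observation eliminates. In short, you are missing the one-line insight that mid-arc after re-descent implies no extension, which removes the need for any re-read and for the charging argument altogether.
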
 
\begin{proof}
We fill array $\MS_{S,T}$ from left to right, by implementing with $\CDAWG_T$ the classical matching statistics algorithm based on suffix link and child operations on the suffix tree. Assume that we have computed $\MS_{S,T}[1..i]$ for some $i$. Let $c=S[i+\MS_{S,T}[i]]$ and let $U = S[i..i+\MS_{S,T}[i]-1] = VX$, where $V$ is the longest prefix of $U$ that is right-maximal in $T$, and $v$ is the node of $\ST_T$ with label $V$. Assume that we know $v$ and the node $v'$ of $\CDAWG_T$ that corresponds to the equivalence class of $v$. Let $w'$ be the node of $\CDAWG_T$ that corresponds to the longest suffix of $\ell(v')$ that is a maximal repeat of $T$. If $|\ell(v)|>|\ell(w')|+1$, then $\MS_{S,T}[i+1]=\MS_{S,T}[i]-1$, since no suffix of $U$ longer than $|\ell(w')|+|X|$ can be followed by character $c$. Otherwise, we move to $w'$ in constant time by following the suffix pointer of $v'$, and we perform a blind search for $X$ from $w'$. Let $\ell(w')X=ZX'$, where $Z=\ell(z)$ is the longest prefix of $\ell(w')X$ that is right-maximal in $T$, and let $z'$ be the node of the CDAWG that corresponds to the equivalence class of $z$. If $|X'|>0$, or if no arc from $z'$ is labeled by $c$, then again $\MS_{S,T}[i+1]=\MS_{S,T}[i]-1$. Otherwise, we use Lemma \ref{lemma:leftToRight} to compute the length of the longest prefix of $S[i+\MS_{S,T}[i]..|S|]$ that matches a prefix of the arc from $z'$ labeled by $c$. The claimed time complexity comes from Lemma \ref{lemma:leftToRight} and from standard amortization arguments used in matching statistics. 
\end{proof}

Note that the data structure in Theorem \ref{thm:ms} takes actually $O(\min\{\newe_T,\newe_{\REV{T}}\})$ words of space, since one could index either $T$ or $\REV{T}$ for computing the matching statistics vector (in the latter case, $S$ is read from right to left). 

Another consequence of Property \ref{obs:grammar} is that we can compute the minimal absent words of $T$ using an index of size proportional just to the number of maximal repeats of $T$ and of their extensions:

\begin{lemma} \label{lemma:maws}
There is a data structure that takes $O(\newe_T+\newe_{\REV{T}})$ words of space, and that allows one to compute the minimal absent words of $T$ in $O(\newe_T+\newe_{\REV{T}}+\mathtt{out})$ time and in $O(\lambda_T + \min\{\mu_T,h_T\})$ words of working space, where $\mathtt{out}$ is the size of the output, $\lambda_T$ is the maximum number of left extensions of a maximal repeat of $T$, and $\mu_T$ is the length of a longest maximal repeat of $T$.
\end{lemma}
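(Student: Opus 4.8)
The plan is to enumerate the minimal absent words by iterating over maximal repeats, using the characterization recalled in Section~\ref{sec:strings}: every MAW of $T$ is of the form $aVb$ with $a,b$ characters and $V$ a maximal repeat of $T$, and conversely $aVb$ is a MAW iff $a\in\Sigma^{\ell}_{T}(V)$, $b\in\Sigma^{r}_{T}(V)$ and $aVb$ does not occur in $T$ (the two longest proper substrings of $aVb$ being $aV$ and $Vb$, all remaining proper substrings occur automatically). Fix $V$ and $b\in\Sigma^{r}_{T}(V)$. Since $Vb$ extends uniquely along the suffix-tree edge leaving the locus of $V$ whose first character is $b$, we have $\mathcal{P}_{T}(Vb)=\mathcal{P}_{T}(\ell(y_b))$, where $y_b$ is the child reached by that edge, hence $\Sigma^{\ell}_{T}(Vb)=\Sigma^{\ell}_{T}(\ell(y_b))\subseteq\Sigma^{\ell}_{T}(V)$; therefore the MAWs with middle $V$ and last character $b$ are exactly $\{aVb:a\in\Sigma^{\ell}_{T}(V)\setminus\Sigma^{\ell}_{T}(\ell(y_b))\}$.

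First I would equip the $O(\newe_T+\newe_{\REV{T}})$-word index with $\CDAWG_T$, $\CDAWG_{\REV{T}}$ and the correspondence between their nodes (as in the suffix-tree representation of~\cite{belazzougui2017representing}), with the data structure of Theorem~\ref{thm:verification} on $\CDAWG_T$, and with the last character of $\pi(F)$ for every nonterminal $F$ of the grammar induced by $\REV{\CDAWG}_T$ (computed bottom-up in $O(\newe_T)$ time). For a node $v'$ of $\CDAWG_T$ with $\ell(v')=V$, the set $\Sigma^{r}_{T}(V)$ is given by the first characters of the out-arcs of $v'$, and $\Sigma^{\ell}_{T}(V)=\Sigma^{r}_{\REV{T}}(\REV{V})$ by the first characters of the out-arcs of the node of $\CDAWG_{\REV{T}}$ corresponding to $v'$; I would load the latter into a hash set of size $O(|\Sigma^{\ell}_{T}(V)|)=O(\lambda_T)$. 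Then, for each out-arc $\gamma=(v',w')$ with $\gamma.\mathtt{char}=b$: if $w'$ is the sink, or if $w'$ is a maximal repeat but $\gamma.\mathtt{left}>0$, then $\ell(y_b)$ is not left-maximal, so $\Sigma^{\ell}_{T}(Vb)$ is a single character $a_0$, which by Properties~\ref{obs:inNeighbors} and~\ref{obs:grammar} equals respectively the character preceding position $\gamma.\mathtt{pos}$ of $T$, or the last character of $\pi(F')$, where $F'$ is the nonterminal corresponding to the source of the in-arc of $w'$ that immediately precedes $\gamma$ in the ordering of in-arcs by their $\mathtt{left}$ field --- so $a_0$ is available in $O(1)$, and I emit $aVb$ for every $a\in\Sigma^{\ell}_{T}(V)\setminus\{a_0\}$; if instead $w'$ is a maximal repeat and $\gamma.\mathtt{left}=0$, then $\ell(y_b)=\ell(w')=W$, $\Sigma^{\ell}_{T}(Vb)=\Sigma^{\ell}_{T}(W)$ is read from $\CDAWG_{\REV{T}}$, and I compute $\Sigma^{\ell}_{T}(V)\setminus\Sigma^{\ell}_{T}(W)$ in $O(|\Sigma^{\ell}_{T}(V)|+|\Sigma^{\ell}_{T}(W)|)$ time with the hash set and emit the corresponding MAWs. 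Each MAW $aVb$ is emitted by streaming the characters of $V$ via Theorem~\ref{thm:verification} (prepending $a$, appending $b$) in $O(|V|)$ time and $O(\min\{|V|,h_T\})\subseteq O(\min\{\mu_T,h_T\})$ working space, which together with the hash set gives working space $O(\lambda_T+\min\{\mu_T,h_T\})$.

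The preprocessing and the traversal of the arcs of $\CDAWG_T$ cost $O(\newe_T+\newe_{\REV{T}})$, and streaming all MAWs costs $O(\mathtt{out})$. In the first case above the scan of $\Sigma^{\ell}_{T}(V)$ costs $O(|\Sigma^{\ell}_{T}(V)|)$ and produces at least $|\Sigma^{\ell}_{T}(V)|-1\ge 1$ MAWs (since $V$ is left-maximal, $|\Sigma^{\ell}_{T}(V)|\ge 2$), so all such scans cost $O(\mathtt{out})$ in total. In the second case the arcs with $\gamma.\mathtt{left}=0$ to a maximal repeat are precisely the suffix-tree parent edges of the tree of maximal repeats (the suffix-tree parent of a maximal repeat is again a maximal repeat, by Property~\ref{obs:inNeighbors}), so the cost is $\sum_W\big(|\Sigma^{\ell}_{T}(\mathrm{parent}(W))|+|\Sigma^{\ell}_{T}(W)|\big)$ over maximal repeats $W$; since $\Sigma^{\ell}_{T}(W)\subseteq\Sigma^{\ell}_{T}(\mathrm{parent}(W))$ (as $W$ extends its parent only to the right), we get $\sum_W|\Sigma^{\ell}_{T}(\mathrm{parent}(W))|=\sum_W|\Sigma^{\ell}_{T}(W)|+\#\{\text{MAWs emitted in this case}\}$, and $\sum_W|\Sigma^{\ell}_{T}(W)|$ equals the total out-degree of $\CDAWG_{\REV{T}}$, i.e. $O(\newe_{\REV{T}})$; hence this case costs $O(\newe_{\REV{T}}+\mathtt{out})$, and adding up yields $O(\newe_T+\newe_{\REV{T}}+\mathtt{out})$. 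The hard part will be exactly this last accounting: one must verify that the suffix-tree parent of a maximal repeat is a maximal repeat, that the telescoping identity holds with the correct sets (so that a naive per-edge cost of $O(|\Sigma^{\ell}_{T}(V)|)$ is in fact charged to the output plus $\newe_{\REV{T}}$, rather than summing to something like $\sum_V|\Sigma^{\ell}_{T}(V)|\cdot|\Sigma^{r}_{T}(V)|$), and that each element of every set difference yields a distinct valid MAW. One must also double-check the $O(1)$ retrieval of $a_0$ via Property~\ref{obs:grammar}, i.e. that the cut positions of the decomposition $\pi(F)=\pi(F_1)\cdots\pi(F_t)$ coincide with the $\mathtt{left}$-values of the in-arcs of $w'$, so that the character preceding a non-left-maximal representative of $W$ is indeed the last character of the adjacent factor.
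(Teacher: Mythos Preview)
Your proposal is correct and follows essentially the same approach as the paper: both iterate over the out-arcs of each CDAWG node, split on whether $\gamma.\mathtt{left}=0$ (equivalently $\gamma.\mathtt{order}=1$ in the paper's notation), and in the left-maximal case perform a set difference of left-extension alphabets while in the other case remove the unique predecessor character. The only cosmetic differences are that the paper stores the predecessor directly as a field $\gamma.\mathtt{previousChar}$ and keeps sorted Weiner-link lists at each node, whereas you recover the predecessor from the last character of the preceding grammar factor and read left extensions through the node correspondence with $\CDAWG_{\REV{T}}$; your time accounting via the telescoping identity is also more explicit than the paper's, which simply notes that each $w'$ takes part in at most one subtraction and charges every list access either to the output or to a Weiner link of $w'$.
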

\begin{proof}
For every arc $\gamma=(v',w')$ of $\CDAWG_T$, we store in a variable $\gamma.\mathtt{order}$ the order of $v'$ among the in-neighbors of $w'$ induced by Property \ref{obs:inNeighbors} and used in $\REV{\CDAWG}_T$ (see Section \ref{sec:cdawg}), and we store in a variable $\gamma.\mathtt{previousChar}$ the character $a$, if any, such that $a\ell(v')b$ is a substring of $\ell(w')$ and $b=\gamma.\mathtt{char}$ is the first character of $\ell(\gamma)$.

Then, we traverse every node $v'$ of $\CDAWG_T$, and we scan every arc $\gamma=(v',w')$. If $\gamma.\mathtt{order}>1$, then $\ell(v')b$, where $b=\gamma.\mathtt{char}$, is always preceded by $\gamma.\mathtt{previousChar}$ in $T$, thus we print $a\ell(v')b$ to the output for all $a$ that label explicit and implicit Weiner links from $v'$ and that are different from $\gamma.\mathtt{previousChar}$. If $\gamma.\mathtt{order}=1$ then $\ell(v')b$ is a left-maximal substring of $T$, so we subtract the set of all Weiner links of $w'$ from the set of all Weiner links of $v'$ by a linear scan of their sorted lists, and we print $a\ell(v')b$ to the output for all characters $a$ in the resulting list. Note that the same Weiner link of $v'$ could be read multiple times, for multiple out-neighbors $w'$ of $v'$. However, every such access can be charged either to the output or to a corresponding Weiner link from $w'$, and each $w'$ takes part in at most one such subtraction. It follows that the time taken by all list subtractions is $O(\newe_{\REV{T}}+\mathtt{out})$.

We reconstruct each $\ell(v')$ in linear time as described in Theorem \ref{thm:verification}. 
\end{proof}

\section*{Acknowledgements}

We thank the anonymous reviewers for simplifying some parts of the paper, for improving its overall clarity, and for suggesting references \cite{gasieniec2005real,gasieniec2003time,lohrey2016traversing} and the current version of Lemma \ref{lemma:maws}.

\bibliographystyle{plain}
\bibliography{spire2017}

\begin{thebibliography}{10}

\bibitem{belazzougui2017representing}
Djamal Belazzougui and Fabio Cunial.
\newblock Representing the suffix tree with the {CDAWG}.
\newblock In {\em CPM 2017}, volume~78 of {\em Leibniz International
  Proceedings in Informatics (LIPIcs)}, pages 7:1--7:13. Schloss
  Dagstuhl--Leibniz-Zentrum fuer Informatik, 2017.

\bibitem{belazzougui2015composite}
Djamal Belazzougui, Fabio Cunial, Travis Gagie, Nicola Prezza, and Mathieu
  Raffinot.
\newblock Composite repetition-aware data structures.
\newblock In {\em CPM 2015}, Lecture Notes in Computer Science, pages 26--39.
  Springer, 2015.

\bibitem{bender2004level}
Michael~A Bender and Mart{\i}n Farach-Colton.
\newblock The level ancestor problem simplified.
\newblock {\em Theoretical Computer Science}, 321(1):5--12, 2004.

\bibitem{berkman1994finding}
Omer Berkman and Uzi Vishkin.
\newblock Finding level-ancestors in trees.
\newblock {\em Journal of Computer and System Sciences}, 48(2):214--230, 1994.

\bibitem{blumer1987complete}
Anselm Blumer, Janet Blumer, David Haussler, Ross McConnell, and Andrzej
  Ehrenfeucht.
\newblock Complete inverted files for efficient text retrieval and analysis.
\newblock {\em Journal of the {ACM}}, 34(3):578--595, 1987.

\bibitem{crochemore2016linear}
Maxime Crochemore, Chiara Epifanio, Roberto Grossi, and Filippo Mignosi.
\newblock Linear-size suffix tries.
\newblock {\em Theoretical Computer Science}, 638:171--178, 2016.

\bibitem{crochemore1997automata}
Maxime Crochemore and Christophe Hancart.
\newblock Automata for matching patterns.
\newblock In {\em Handbook of Formal Languages}, pages 399--462. Springer,
  1997.

\bibitem{crochemore1998automata}
Maxime Crochemore, Filippo Mignosi, and Antonio Restivo.
\newblock Automata and forbidden words.
\newblock {\em Information Processing Letters}, 67(3):111--117, 1998.

\bibitem{CrochemoreV97}
Maxime Crochemore and Renaud V\'erin.
\newblock Direct construction of compact directed acyclic word graphs.
\newblock In {\em CPM 1997}, volume 1264 of {\em Lecture Notes in Computer
  Science}, pages 116--129. Springer, 1997.

\bibitem{Gagie06a}
Travis Gagie.
\newblock Large alphabets and incompressibility.
\newblock {\em Information Processing Letters}, 99(6):246--251, 2006.

\bibitem{gasieniec2005real}
Leszek Gasieniec, Roman~M Kolpakov, Igor Potapov, and Paul Sant.
\newblock Real-time traversal in grammar-based compressed files.
\newblock In {\em DCC 2005}, page 458, 2005.

\bibitem{gasieniec2003time}
Leszek Gasieniec and Igor Potapov.
\newblock Time/space efficient compressed pattern matching.
\newblock {\em Fundamenta Informaticae}, 56(1-2):137--154, 2003.

\bibitem{gusfield1997algorithms}
Dan Gusfield.
\newblock {\em Algorithms on strings, trees and sequences: computer science and
  computational biology}.
\newblock Cambridge University Press, 1997.

\bibitem{lohrey2016traversing}
Markus Lohrey, Sebastian Maneth, and Carl~Philipp Reh.
\newblock Traversing grammar-compressed trees with constant delay.
\newblock In {\em DCC 2016}, pages 546--555, 2016.

\bibitem{RNOtalg11}
{Lu\'{i}s S. Russo}, Gonzalo Navarro, and Arlindo~L. Oliveira.
\newblock Fully-compressed suffix trees.
\newblock {\em ACM Transactions on Algorithms}, 7(4):53, 2011.

\bibitem{makinen2005succinct1}
Veli M{\"a}kinen and Gonzalo Navarro.
\newblock Succinct suffix arrays based on run-length encoding.
\newblock In {\em CPM 2005}, Lecture Notes in Computer Science, pages 45--56.
  Springer, 2005.

\bibitem{MakinenNSV10}
Veli M{\"{a}}kinen, Gonzalo Navarro, Jouni Sir{\'{e}}n, and Niko
  V{\"{a}}lim{\"{a}}ki.
\newblock Storage and retrieval of highly repetitive sequence collections.
\newblock {\em Journal of Computational Biology}, 17(3):281--308, 2010.

\bibitem{NRdcc14.1}
Gonzalo Navarro and Luis~MS Russo.
\newblock Fast fully-compressed suffix trees.
\newblock In {\em DCC 2014}, pages 283--291. IEEE, 2014.

\bibitem{Raffinot2001}
Mathieu Raffinot.
\newblock On maximal repeats in strings.
\newblock {\em Information Processing Letters}, 80(3):165--169, 2001.

\bibitem{SirenVMN08}
Jouni Sir{\'{e}}n, Niko V{\"{a}}lim{\"{a}}ki, Veli M{\"{a}}kinen, and Gonzalo
  Navarro.
\newblock Run-length compressed indexes are superior for highly repetitive
  sequence collections.
\newblock In {\em SPIRE 2008}, Lecture Notes in Computer Science, pages
  164--175, 2008.

\bibitem{takagi2017linear}
Takuya Takagi, Keisuke Goto, Yuta Fujishige, Shunsuke Inenaga, and Hiroki
  Arimura.
\newblock Linear-size {CDAWG:} new repetition-aware indexing and grammar
  compression.
\newblock In {\em String Processing and Information Retrieval - 24th
  International Symposium, {SPIRE} 2017, Palermo, Italy, September 26-29, 2017,
  Proceedings}, pages 304--316, 2017.
\newblock arXiv:1705.09779.

\end{thebibliography}

%

\end{document}